\newcommand{\ra}[1]{\renewcommand{\arraystretch}{#1}}
\newcommand{\m}[1]{{\mathsf #1}}
\DeclareMathOperator{\Tr}{Tr}
\newcommand{\R}{\mathbb{R}}
\newcommand{\eg}{\emph{e.g.}}
\newcommand{\ie}{\emph{i.e.}}
\newcommand{\etal}{et al.\@ }
\renewcommand{\O}{O}
\newcommand{\softO}{{\tilde O}}
\newcommand{\rd}{\mathcal{R}}
\newcommand{\sk}{\mathcal{S}}
\newcommand{\I}{\mathcal{I}}
\newcommand{\N}{\mathcal{N}}
\newcommand{\F}{\mathcal{F}}
\renewcommand{\L}{\mathscr{L}}
\renewcommand{\H}{\mathcal{H}}
\newcommand{\thetahat}{\hat{\theta}_{\text{MLE}}}
\newtheorem{definition}{Definition}
\newtheorem{remark}{Remark}
\newtheorem{theorem}{Theorem}
\colorlet{lightgray}{black!20}
\colorlet{darkgray}{black!50}
\newcommand{\revision}[1]{#1}
\renewcommand{\L}{\mathscr{L}}
\newcommand{\peelidx}[2] {
\I_{#1;#2}}
\newcommand{\G}[2]{
	\m{G}_{#1;#2}
}
\newcommand{\Ghat}[3]{
	\widehat{\m{G}}_{#1;#2	#3}
}
\newcommand{\W}[2]{
	\m{W}_{#1;#2}
}
\newcommand{\U}[2]{
	\m{U}_{#1;#2}
}
\newcommand{\M}[2]{
	\m{M}_{#1;#2}
}
\newcommand{\nn}[2]{
	n_{#1;#2}
}
	\newcommand{\funding}[1]{{\bf Funding:} #1}
	\newcommand{\email}[1]{\url{#1}}
	\renewcommand{\O}{O}
\title{Fast spatial Gaussian process\\ maximum likelihood estimation\\ via skeletonization factorizations}
\author{
	Victor Minden%
	\thanks{Institute for Computational and Mathematical Engineering, Stanford University, Stanford, CA 94305 (\email{vminden@stanford.edu}). \funding{Stanford Graduate Fellowship in Science \& Engineering and U.S. Department of Energy Computational Science
Graduate Fellowship (grant number DE-FG02-97ER25308).}}%
\and
	Anil Damle%
	\thanks{\revision{Department of Computer Science, Cornell University, Ithaca, NY 14850 (\email{damle@cornell.edu})}. \funding{Simons Graduate Research Assistantship and National Science Foundation Graduate Research Fellowship (grant number DGE-1147470).}}%
	\and
	 Kenneth L. Ho%
  \thanks{Department of Mathematics, Stanford University, Stanford, CA 94305.  Current address: TSMC Technology Inc., 2851 Junction Ave., San Jose, CA 95134. (\email{klho@alumni.caltech.edu}). \funding{National Science Foundation Mathematical Sciences Postdoctoral Research Fellowship (grant number DMS-1203554).}}%
  \and
	Lexing Ying\thanks{Department of Mathematics and Institute for Computational and Mathematical Engineering, Stanford University, Stanford, CA 94305 (\email{lexing@math.stanford.edu}). \funding{National Science Foundation (grant number DMS-1328230 and DMS-1521830) and U.S. Department of Energy Advanced Scientific Computing Research program (grant number DE-FC02-13ER26134 and DE-SC0009409).}}
}
\date{}
\begin{document}

\maketitle

\begin{abstract}
Maximum likelihood estimation for parameter-fitting given observations from a Gaussian process in space is a computationally-demanding task that restricts the use of such methods to moderately-sized datasets.  We present a framework for unstructured observations in two spatial dimensions that allows for evaluation of the log-likelihood and its gradient (\ie, the score equations) in $\softO(n^{3/2})$ time under certain assumptions, where $n$ is the number of observations.  Our method relies on the skeletonization procedure described by Martinsson \& Rokhlin \cite{martinsson-rokhlin} in the form of the recursive skeletonization factorization of Ho \& Ying \cite{hifie}.  Combining this with an adaptation of the matrix peeling algorithm of Lin \etal \cite{Lin2011} for constructing $\mathcal{H}$-matrix representations of black-box operators, we obtain a framework that can be used in the context of any first-order optimization routine to quickly and accurately compute maximum-likelihood estimates.
\end{abstract}

\section{Introduction}
Gaussian processes are commonly used in the applied sciences as a statistical model for spatially-indexed observations.  In such applications, each observation $z_i\in\R$ of some quantity is associated with a corresponding location $x_i\in\Omega\subset\R^d$ with $d=2$ or $3$.  Given a prescribed covariance kernel $K(\cdot,\cdot;\theta)$ that maps $\R^d\times\R^d$ to $\R$ and is specified up to some parameter vector $\theta\in\R^p$, any vector of observations $z=[z_1,\dots,z_n] \in \R^n$ (with associated locations $\{x_i\}_{i=1}^n$) is assumed to be randomly distributed as a multivariate Gaussian 
\begin{equation}\label{eq:normaldist}
z \sim N(0, \m{\Sigma}(\theta)) \text{ with covariances } [\m{\Sigma}(\theta)]_{ij} = K(x_i,x_j;\theta).
\end{equation} 
We assume the mean of the process to be 0 (\ie, known) for simplicity {though, as we discuss later, this is not stricly necessary.}

{Neglecting $\theta$, common choices for the covariance kernel include the {family of rational quadratic kernels} 
\begin{align}\label{eq:rationalquadratic}
K(x,y) = \left(1 + \frac{\|x-y\|^2}{2\alpha}\right)^{-\alpha},
\end{align}
{which has corresponding processes that are infinitely differentiable in the mean-squared sense for all $\alpha > 0$.  Notably, this family includes the Gaussian or squared exponential kernel as a limiting case as $\alpha\to\infty$.  Another popular family is the Mat\'ern family of kernels}
\begin{align}\label{eq:matern}
K(x,y) = \frac{1}{\Gamma(\nu)2^{\nu-1}}\left(\sqrt{2\nu}\cdot\|x-y\|\right)^\nu K_\nu\left(\sqrt{2\nu}\cdot\|x-y\|\right),
\end{align}
where $K_\nu(\cdot)$ is the modified second-kind Bessel function {of order $\nu$,} $\Gamma(\cdot)$ is the gamma function, and the corresponding process is $\lfloor{\nu}\rfloor$ times mean-squared differentiable.   To explicitly parameterize $K(\cdot,\cdot;\theta)$, we might introduce a {correlation length parameter $\rho$} leading to
\begin{align*}
K(x,y;\theta) &=  \left(1 + \frac{\|x-y\|^2}{2\alpha\rho^2}\right)^{-\alpha}
\end{align*}
in the case of the rational quadratic kernel.  The fundamental parameters of the kernel family  e.g., $\alpha$ or $\nu$, may be considered as part of $\theta$ or fixed $\emph{a priori}$. }

Typically, the parameter vector $\theta$ is unknown and must be estimated from the data.  For example, given a parameterized family of kernels and a set of observations, we might want to infer $\theta$ for later use in estimating the value of the field at other spatial locations as in kriging (see Stein \cite{stein1999interpolation}).  In this paper we consider the general Gaussian process maximum likelihood estimation (MLE) problem for $\theta$: given an observation vector $z\in\R^n$, find $\thetahat$ maximizing the Gaussian process log-likelihood
\begin{align}\label{eq:loglikelihood}
\ell(\theta) &\equiv -\frac{1}{2}z^T\m{\Sigma}^{-1}z - \frac{1}{2}\log | \m{\Sigma} | - \frac{n}{2}\log 2\pi,
\end{align}
where we have dropped the explicit dependence of $\m{\Sigma}$ on $\theta$ for notational convenience.

If $\theta$ is unconstrained, then $\thetahat$ is given by maximizing \eqref{eq:loglikelihood} over all of $\R^p$.  In this case, it is possible under certain assumptions to obtain the maximum likelihood estimate by solving the score equations $g(\theta) = 0$, where the gradient of the log-likelihood $g(\theta)\in\R^p$ is given component-wise by
\begin{align}\label{eq:gradient}
g_i&\equiv\frac{\partial \ell(\theta)}{\partial \theta_i} = \frac{1}{2}z^T\m{\Sigma}^{-1}\m{\Sigma}_i\m{\Sigma}^{-1}z - \frac{1}{2}\Tr(\m{\Sigma}^{-1}\m{\Sigma}_i),\quad i=1,\dots,p,
\end{align}
where $\m{\Sigma}_i\equiv \frac{\partial}{\partial \theta_i}\m{\Sigma}$.
This may be accomplished without evaluating $\ell(\theta)$ as is done by, \eg, Anitescu \etal \cite{Anitescu2012} and Stein \etal \cite{Stein2013}.  In contrast, we consider in this paper the use of first-order methods for nonlinear optimization that, at each iteration, use both gradient and log-likelihood evaluations to find a local optimum.  This allows for the treatment of constraints if desired, though we note the methods here are equally applicable to the unconstrained case.

For any given $\theta$, both $\ell(\theta)$ and the gradient $g(\theta)$ contain a number of terms whose evaluation is traditionally computationally expensive.  For example, the Cholesky decomposition of $\m{\Sigma}$ may be used to calculate $z^T\m{\Sigma}^{-1}z$ and $z^T\m{\Sigma}^{-1}\m{\Sigma}_i\m{\Sigma}^{-1}z$ as well as the log-determinant $\log |\m{\Sigma}|$ and trace $\Tr(\m{\Sigma}^{-1}\m{\Sigma}_i)$, but the asymptotic computation and storage complexities are $\O(n^3)$ and $\O(n^2)$ respectively.  This is prohibitively expensive for datasets with a large number of observations, necessitating alternative approaches.

\subsection{Our method}
The contribution of this paper is a framework for efficiently finding $\thetahat$ by taking advantage of fast hierarchical matrix algorithms developed in the numerical linear algebra community.  Such algorithms exploit the fact that linear operators defined in terms of pairwise kernel evaluations between points embedded in $\R^d$ frequently exhibit \emph{hierarchical rank structure}, as we discuss in \cref{sec:factor} (briefly, many different-sized off-diagonal blocks of the matrix are close to low-rank and thus compressible).  Our framework has two key parts:
\begin{enumerate}[(I)]
	\item Construct a fast approximate hierarchical factorization of the covariance matrix $\m{\Sigma}$ and its derivatives $\m{\Sigma}_i$ for $i=1,\dots,p$ in a matrix-free fashion using the kernel function $K(\cdot,\cdot;\theta)$ and the points $\{x_i\}_{i=1}^n$.
	\item Additionally factor the derivatives of the covariance matrix $\m{\Sigma}_i$ for $i=1,\dots,p$ through the same approach as in (I).  Use the hierarchical factorizations as a fast black-box operator for approximately applying $\m{\Sigma}^{-1}\m{\Sigma}_i$ for $i=1,\dots,p$, and use this operator and the points $\{x_i\}_{i=1}^n$ to compute the traces $\Tr(\m{\Sigma}^{-1}\m{\Sigma}_i)$ through a randomized ``matrix peeling'' scheme for efficiently extracting the trace of a hierarchically rank-structured linear operator.
\end{enumerate}
	In both parts, the approximation accuracy is well-controlled by specified tolerance parameters intrinsic to the algorithms.

	For any $\theta$, the approximate factorization of part (I) can be used to efficiently evaluate {the terms composing $\ell(\theta)$ (including the log-determinant}), which overlaps with recent work by Ambikasaran \etal \cite{ambikasaran} that addresses the use of hierarchical factorizations for kernelized covariance matrices {for computing these terms} (see also Khoromskij \etal \cite{khoromskij2008} and B\"orm \& Garcke \cite{borm2007} for earlier work on $\mathcal{H}$-matrix techniques for fast computation of matrix-vector products with $\m{\Sigma}$ in a Gaussian process context).  This piece of the framework alone gives sufficient machinery to perform black-box optimization using numerical derivatives.  However, {using finite differences of an approximate log-likelihood can magnify approximation errors, which can lead to larger inaccuracies in the approximate gradient depending on, \eg, the conditioning of $\m{\Sigma}$} (see \cref{sec:ocean1} for a relatively benign example).  Therefore, central to our framework is the computation in (II) of the gradient components $g_i$ for $i=1,\dots,p$, which requires the trace terms $\Tr(\m{\Sigma}^{-1}\m{\Sigma}_i)$.  

	In the simplest form detailed in this paper, our framework employs the {\emph{recursive skeletonization factorization}} \cite{hifie} as the approximate hierarchical factorization variant of choice for $\m{\Sigma}$ and $\m{\Sigma}_i$, $i=1,\dots,p$.  We then compute the trace terms in the gradient using an adaptation of the matrix peeling algorithm of Lin \etal \cite{Lin2011}.  Combining these two tools, we obtain an efficient method for evaluating \eqref{eq:loglikelihood} and \eqref{eq:gradient}---and, ultimately, finding $\thetahat$---with high and controllable accuracy using a black-box first-order optimization package (\eg, \texttt{fminunc} or \texttt{fmincon} in MATLAB\textsuperscript{\textregistered}).

While the framework of this paper technically applies to observations in $d$ dimensions for general $d$, the computational complexity increases in high dimensions due to how the runtime of hierarchically rank-structured factorizations depends on the numerical rank of off-diagonal blocks.  For example, applying these methods in the case $d=1$ is essentially optimal in the sense that off-diagonal matrix blocks have numerical rank that is not strongly dependent on the number of observations.  We direct the reader to Ambikasaran \etal \cite{ambikasaran} for extensive numerical examples of factoring kernel matrices in this case.  In contrast, for $d=3$ the observed rank growth is in general much larger and leads to greater asymptotic complexities.  We focus on the case $d=2$ in the remainder of this paper, but note the broader applicability.

\subsection{Alternative approaches}
Due to the prevalance of Gaussian process models, a number of methods exist in the literature for fast computations involving kernelized covariance matrices.  To decrease apply, solve, and storage costs, $\m{\Sigma}$ can be replaced with a sparser ``tapered'' approximant as described by Furrer \etal \cite{nychka}, wherein the desired covariance kernel is multiplied pointwise with a compactly-supported tapering function to attain sparsity.  Of course, the computational benefit of tapering depends on the sparsity of the resulting approximant, which is limited by the desired accuracy if the correlation length of the kernel is not small.

If $\m{\Sigma}$ decomposes naturally into the sum of a diagonal and a numerically low-rank matrix then such a decomposition can be quite efficient for computation (see Cressie \& Johannesson \cite{cressie}), but this representation is too simple for the applications and kernel functions we consider.  Extending this to a general sparse-plus-low-rank model by replacing the diagonal piece with a tapered covariance kernel can perform better than either a tapered or low-rank $\m{\Sigma}$ alone, as shown by Sang \& Huang \cite{Sang2012} (see also Vanhatal \etal \cite{vanhatalo}).

For cases where the underlying process is stationary and the observations lie on a regular grid it is possible to directly approximate the log-likelihood using spectral approximations due to Whittle \cite{whittle} or to quickly apply the covariance matrix in Fourier space to solve linear systems with an iterative method.  Further, in such cases these systems can be preconditioned using the method of Stein \etal \cite{steinpc} yielding efficient methods for many important problem classes.  For irregularly spaced data such as we consider in this paper, however, these approaches do not apply directly.  One approach for log-likelihood evaluation (and thus derivative-free optimization) with generally distributed data is that of Aune \etal \cite{aune}, which offers an involved framework for approximating the log-determinant using Krylov methods, assuming the covariance matrix or its inverse can be efficiently applied.  More recently, Castrill\'on-Cand\'as \etal \cite{castrillon} demonstrate a combination of multi-level preconditioning and tapering for fast derivative-free restricted maximum likelihood estimation, though gradient computation is not discussed.

An alternative approach to approximating the Gaussian process log-likelihood directly is to explicitly construct and solve a different set of estimating equations that is less computationally cumbersome.  For example, the Hutchinson-like sample average approximation (SAA) estimator \cite{Anitescu2012,Stein2013} falls into this category, as do the composite likelihood methods described by, \eg, Vecchia \cite{vecchia} and Stein \etal \cite{steincomp} and their extension, block composite likelihood methods (see, \eg, Eidsvik \etal \cite{eidsvik}).  Another notable effort based on a modified model is the work of Lindgren \etal \cite{lindgren}, which gives a way of approximating Gaussian processes by Gaussian Markov random fields for specific kernels in the Mat\'ern family.  In practice, these methods perform quite well, but in our approach, we consider maximizing the true likelihood as opposed to alternative models or estimators.

\subsection{Outline}
The remainder of the paper is organized as follows.  In \cref{sec:factor} we review hierarchical matrix structure and outline the recursive skeletonization factorization \cite[section 3]{hifie}, which is our hierarchical matrix format of choice for fast MLE.  In \cref{sec:peel}, we discuss a modification of the matrix peeling algorithm by Lin \etal \cite{Lin2011}, which we use to {quickly evaluate} the gradient of the log-likelihood.  In \cref{sec:summary}, we succinctly summarize our framework.  In \cref{sec:results} we present numerical results on a number of test problems and demonstrate the scaling of our approach.  Finally, in \cref{sec:conclusion}, we make some concluding remarks.

\section{Factorization of the covariance matrix}\label{sec:factor}

{Consider the kernelized covariance matrix $\m{\Sigma}$ as in \eqref{eq:normaldist}, and assume for simplicity of exposition that the points $\{x_i\}_{i=1}^n$ are uniformly distributed inside a rectangular domain $\Omega$.  Partitioning $\Omega$ into four equal rectangular subdomains $\Omega_{1;i}$ for $i=1,\dots,4$, it has been observed that the corresponding block partitioning of $\m{\Sigma}$ exposes low-rank structure of off-diagonal blocks when $K(x,y;\theta)$ is sufficiently nice as a function of $\|x-y\|$.}

Concretely, we define the set $[n]\equiv\{1,2,\dots,n\}$ and let $\I_{1;i}\subset[n]$ denote the index set indexing degrees of freedoms (DOFs) located inside $\Omega_{1;i}$ for $i=1,\dots,4$ such that $\{x_j\}_{j\in\I_{1;i}} \subset\Omega_{1;i}$ and $\bigcup_{i=1}^4 \I_{1;i} = [n].$  In a self-similar fashion, we further partition $\Omega_{1;1}$ into the four subdomains $\Omega_{2;i}$ (with corresponding DOFs $\I_{2;i}$) for $i=1,\dots,4$ and obtain the decomposition shown in \cref{fig:rankstructure} (left).  {Assuming that the covariance kernel is smooth away from $x=y$ and does not exhibit high-frequency oscillations, the off-diagonal blocks $\m{\Sigma}(\I_{1;i},\I_{1;j})$ for $i\ne j$ and $\m{\Sigma}(\I_{2;i},\I_{2;j})$ for $i\ne j$ in the corresponding partitioning shown in the same figure (right) tend to be \emph{numerically low-rank} and thus compressible.}

\begin{definition}[Numerically low-rank]\label{def:lowrank}
We call a matrix $\m{A}\in\R^{m_1\times m_2}$ \emph{numerically low-rank} with respect to a specified tolerance $\epsilon$ if for some $r<\min(m_1,m_2)$ there exist matrices $\m{V_1}\in\R^{m_1\times r}$ and $\m{V_2}\in\R^{m_2\times r}$ such that $\|\m{A} - \m{V}_1\m{V}_2^T\|_2 \le \epsilon \|A\|_2{.}$
\end{definition}

{The} rank structure of \cref{fig:rankstructure} includes numerically low-rank blocks at multiple spatial scales \emph{independent of the length-scale of the underlying Gaussian process}, \ie, we may continue to recursively subdivide the domain and expose more compressible blocks of $\m{\Sigma}$.  Explicitly representing each of these blocks in low-rank form leads to the so-called hierarchical off-diagonal low-rank (HODLR) matrix format that has been used by Ambikasaran \etal \cite{ambikasaran} to compress various families of covariance kernels with application to Gaussian processes.

{
	\begin{remark}
	A sufficient condition to ensure this rank structure is that for any pair of distinct subdomains on the same level $\Omega_{\ell;i}$ and $\Omega_{\ell;j}$ there exists an approximation $K(x_i,x_j;\theta)\approx \sum_{k=1}^r f_k(x_i) g_k(x_j)$ for any $x_i\in\I_{\ell;i}$ and $x_j\in\I_{\ell;j}$ where the number of terms $r$ is relatively small (via, e.g., Chebyshev polynomials).  An important example where this is not typically the case is periodic kernels with short period relative to the size of the domain.  Further, the techniques we discuss here are less relevant to compactly-supported kernels, which are typically already efficient to compute with using standard sparse linear algebra.
	\end{remark}
}

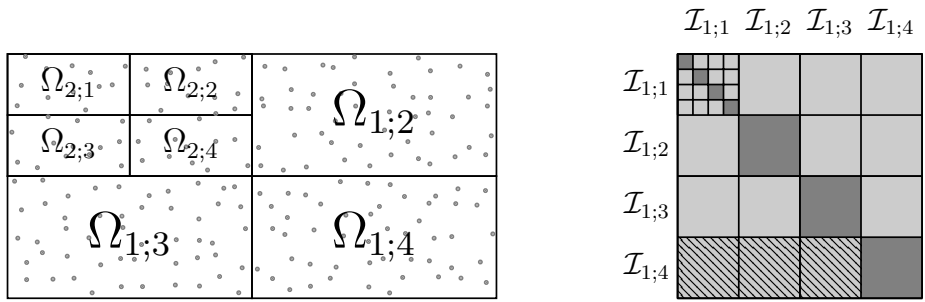
\begin{figure}[h]%
\centering
\scalebox{3.25}{
		\begin{tikzpicture}
			\draw[draw=none, use as bounding box](0,0) rectangle (2,1);
			\draw[very thin] (0,0) rectangle (1,0.5);
			\draw[very thin] (0,0.5) rectangle (1,1);
			\draw[very thin] (1,0) rectangle (2,0.5);
			\draw[very thin] (1,0.5) rectangle (2,1);

			\draw[very thin] (0,0.5) rectangle (0.5,0.75);
			\draw[very thin] (0,0.75) rectangle (0.5,1);
			\draw[very thin] (0.5,0.5) rectangle (1,0.75);
			\draw[very thin] (0.5,0.75) rectangle (1,1);

			\foreach\i in {0,...,19}{
				\foreach\j in {0,...,9}{
	    			\filldraw[color=darkgray] (1/64+\i/10 + 0.075*random,1/64+\j/10 + 0.075*random) circle[radius=3e-3]{};
	    		}
	    	}
	    	\draw (1.5,0.25) node[scale=0.5] {$\Omega_{1;4}$};
	    	\draw (1.5,0.75) node[scale=0.5] {$\Omega_{1;2}$};
	    	\draw (0.5,0.25) node[scale=0.5] {$\Omega_{1;3}$};

	    	\draw (0.75,0.625) node[scale=0.33] {$\Omega_{2;4}$};
	    	\draw (0.25,0.625) node[scale=0.33] {$\Omega_{2;3}$};
			\draw (0.75,0.875) node[scale=0.33] {$\Omega_{2;2}$};
			\draw (0.25,0.875) node[scale=0.33] {$\Omega_{2;1}$};
		\end{tikzpicture}
		}
\hspace{3em}
 	\scalebox{3.25}{
			\begin{tikzpicture}
			\draw[draw=none, use as bounding box](0,0) rectangle (1,1.125);
			\draw[draw=black, use as bounding box,very thin](0,0) rectangle (1,1);
			\fill[lightgray] (0,0) rectangle (1,1);
			\fill[darkgray] (0.25,0.5) rectangle (0.5,0.75);
			\fill[darkgray] (0.5,0.25) rectangle (0.75,0.5);
			\fill[darkgray] (0.75,0) rectangle (1,0.25);

			\fill[darkgray] (0,0.9375) rectangle (0.0625,1);
			\fill[darkgray] (0.0625,0.875) rectangle (0.125,0.9375);
			\fill[darkgray] (0.125,0.8125) rectangle (0.1875,0.875);
			\fill[darkgray] (0.1875,0.75) rectangle (0.25,0.8125);

			\draw[draw=none,very thin,pattern=north west lines, pattern color = black] (0,0) rectangle (0.75,0.25);

			\draw[step=0.5cm,black,very thin] (0,0) grid (1,1);
			\draw[step=0.25cm,black,very thin] (0,0.5) grid (0.5,1);
			\draw[step=0.25cm,black,very thin] (0.5,0) grid (1,0.5);

			\draw[step=0.25cm,black,very thin] (0.5,0.5) grid (1,1);
			\draw[step=0.25cm,black,very thin] (0,0) grid (0.5,0.5);

			\draw[step=0.0625cm,black,very thin] (0,0.75) grid (0.25,1);

			\foreach\j in {1,...,4}{
				\draw (-0.125 + \j * 0.25,1.125) node[scale=0.3] {$\I_{1;\j}$};
			}
			\foreach\j in {1,...,4}{
				\draw (-0.125,1.125-\j * 0.25) node[scale=0.3] {$\I_{1;\j}$};
			}

		\end{tikzpicture}
	}
\caption{To expose low-rank structure in the matrix $\m{\Sigma}$, consider the partitions $\Omega = \cup_{i=1}^4 \Omega_{1;i}$ and $\Omega_{1;1} = \cup_{i=1}^4\Omega_{2;i}$ (left), with corresponding index sets $\I_{\ell;i}$ for $\ell=1,\dots,2$, $i=1,\dots,4$.  Due to smoothness properties of $K(\cdot,\cdot;\theta)$, the blocks of $\m{\Sigma}$ are rank-structured (right), where dark gray blocks on the diagonal are full-rank and light gray off-diagonal blocks are numerically low-rank.  Representing each off-diagonal block in low-rank form independently leads to the HODLR format.  In contrast, the HBS format uses {one low-rank representation for all off-diagonal blocks for each block row}, \eg, the patterned blocks in the bottom row are aggregated into a single low-rank matrix.   \label{fig:rankstructure}}
\end{figure}

The HODLR format is only one of many hierarchical matrix formats, appearing as a special case of the $\H$- and $\H^2$-matrices of Hackbusch and collaborators \cite{Hackbusch,HackbuschK,HackbuschB}.  This format is particularly simple as at each level it compresses \emph{all off-diagonal blocks} of $\m{\Sigma}$, including those corresponding to domains that are adjacent (\eg, $\Omega_{1;2}$ and $\Omega_{1;3}$).  Matrices compressible in this way are referred to as \emph{weakly-admissible} \cite{Hackbuschweak}, in contrast to \emph{strongly-admissible} matrices which compress only a subset of off-diagonal blocks at each level.  Closely related literature includes work on hierarchically semiseparable (HSS) matrices \cite{xia,fastulv,fasthss} and hierarchically block separable (HBS) matrices \cite{martinsson-rokhlin,domainsAd} which offer simplified representations for weakly-admissible matrices with improved runtime.

For matrices where the entries are explicitly generated by an underlying kernel function such as $\m{\Sigma}$ and its derivatives, specific factorization algorithms have been developed to exploit this additional structure for increased efficiency \cite{gg,domainsAd,rskel,hifie,corona2013}.  These ``skeletonization-based'' algorithms, based on the framework introduced by Martinsson \& Rokhlin \cite{martinsson-rokhlin} stemming from observations by Starr \& Rokhlin \cite{starr_rokhlin} and Greengard \& Rokhlin \cite{greengard-rokhlin}, construct low-rank representations of certain off-diagonal blocks using the skeletonization
process described by Cheng \etal \cite{id}.

Our framework is agnostic to the choice of hierarchical factorization used for $\m{\Sigma}$ and its derivatives, provided that the factorization admits fast linear algebra computations (including computation of the log-determinant) with the underlying operator.  The {recursive skeletonization factorization} that we use in this paper was first introduced by {Ho \& Ying \cite[section 3]{hifie}} as a multiplicative factorization based on skeletonization \cite{martinsson-rokhlin}.  {In the remainder of this section we provide a brief review of the algorithm.}

\begin{remark}\label{rem:tree}
In what follows, we will continue to assume that the DOFs $\{x_i\}_{i=1}^n$ are uniformly distributed inside a rectangular subdomain $\Omega$ for simplicity of exposition.  Further, we we will describe the algorithm as though the quadtree representing the hierarchical partitioning of space is \emph{perfect}, \ie, every subdomain is subdivided into four child subdomains at every level.  {I}n practice an adaptive decomposition of space is used {to avoid subdividing} the domain in regions {of low observation density}.
\end{remark}

\subsection{Block compression through skeletonization}\label{sec:blockcompression}
We begin by recursively subdividing $\Omega$ into four subdomains until each leaf-level subdomain contains a constant number of observations independent of $n$.  This leads to a quadtree data structure with levels labeled $\ell=0$ through $\ell=L$, where $\ell=0$ refers to the entire domain $\Omega$ and $\ell=L$ refers to the collection of subdomains $\Omega_{L;i}$ for $i=1,\dots,4^{L}$.  Considering factorization of the covariance matrix $\m{\Sigma}$, the basic intuition of the method is to first compress all blocks of $\m{\Sigma}$ corresponding to covariances between observations in distinct subdomains at the leaf level, and then to recurse in a bottom-up traversal{.}

Consider first a single leaf-level subdomain containing observations indexed by $\I\subset[n]$ and define the complement DOF set $\I^c\equiv[n]\setminus \I$.  Given a specified tolerance $\epsilon>0$, the algorithm proceeds by compressing the off-diagonal blocks $\m{\Sigma}(\I,\I^c)$ and $\m{\Sigma}(\I^c,\I)$ {as in the HBS format (see \cref{fig:rankstructure})} through the use of an \emph{interpolative decomposition} (ID) \cite{id}.
\begin{definition}[Interpolative decomposition]\label{def:id}
  Given a matrix $\m{A}\in\R^{m\times |\I|}$ with columns indexed by $\I$ and a tolerance $\epsilon > 0$, an $\epsilon$-accurate interpolative decomposition of $\m{A}$ is a partitioning of $\I$ into DOF sets
  associated with so-called skeleton columns $\sk\subset \I$ and redundant columns $\rd = \I \setminus \sk$ and a corresponding interpolation matrix $\,\m{T}_\I$ such that $\|\m{A}(:,\rd)- \m{A}(:,\sk)\m{T}_\I\|_2 \le \epsilon\|\m{A}\|_2,$
where $\m{A}(:,\rd)\in\R^{m\times|\rd|}$ is given by subselecting the columns of $\m{A}$ indexed by $\rd$, and $\m{A}(:,\sk)$ is defined analogously.
\end{definition}
{It is desirable in \cref{def:id}} to take $|\sk|$ as small as possible for a given $\epsilon$.

Given an ID of $\m{\Sigma}(\I^c,\I)$ such that $\m{\Sigma}(\I^c,\rd) \approx \m{\Sigma}(\I^c,\sk)\m{T}_\I$, $\m{\Sigma}$ can be written {in block form (up to a permutation)} as
\begin{align*}
\left[ \begin{array}{rl|l}
\m{\Sigma}(\I^c,\I^c) & \m{\Sigma}(\I^c,\sk) & \m{\Sigma}(\I^c,\rd) \\
\m{\Sigma}(\sk,\I^c) & \m{\Sigma}(\sk,\sk) & \m{\Sigma}(\sk,\rd) \\ \hline
\m{\Sigma}(\rd,\I^c) & \m{\Sigma}(\rd,\sk) & \m{\Sigma}(\rd,\rd)
\end{array}\right]\approx\left[ \begin{array}{rl|l}
\m{\Sigma}(\I^c,\I^c) & \m{\Sigma}(\I^c,\sk) & \m{\Sigma}(\I^c,\sk)\m{T}_\I \\
\m{\Sigma}(\sk,\I^c) & \m{\Sigma}(\sk,\sk) & \m{\Sigma}(\sk,\rd) \\ \hline
\m{T}_\I^T\m{\Sigma}(\sk,\I^c) & \m{\Sigma}(\rd,\sk) & \m{\Sigma}(\rd,\rd)
\end{array}\right].
\end{align*}
Using a sequence of block row and column operations, we first eliminate the blocks $\m{\Sigma}(\I^c,\sk)\m{T}_\I$ and $\m{T}_\I^T\m{\Sigma}(\sk,\I^c)$ and then decouple the bottom-right {block to} obtain
\begin{align*}
\m{L}_\I^{-1}\m{\Sigma}\m{L}_\I^{-T} &\approx
\left[ \setlength\arraycolsep{4pt} \begin{array}{rc|l}
\m{\Sigma}(\I^c,\I^c) & \m{\Sigma}(\I^c,\sk) &\\
\m{\Sigma}(\sk,\I^c) & \m{\Sigma}(\sk,\sk) &\m{X}_{\sk\rd}  \\ \hline
&  \m{X}_{\rd\sk}& \m{X}_{\rd\rd}
\end{array}\right] 
= \m{U}_\I
\left[\setlength\arraycolsep{4pt}\begin{array}{rc|l}
\m{\Sigma}(\I^c,\I^c) & \m{\Sigma}(\I^c,\sk) &\\
\m{\Sigma}(\sk,\I^c) & \m{X}_{\sk\sk} &  \\ \hline
&  & \m{X}_{\rd\rd}
\end{array}\right]\m{U}_\I^T,
\end{align*}
where $\m{L}_\I$ and $\m{U}_\I$ are block unit-triangular matrices that are fast to apply or invert and the $\m{X}$ subblocks are linear combinations of the $\m{\Sigma}$ subblocks.

\subsection{The recursive skeletonization factorization}
Defining the collection of DOF sets corresponding to subdomains at level $\ell=L$ as $\L_L\equiv\{\I_{L;1},\I_{L;2},\dots,\I_{L;4^L}\}$, we use the skeletonization process of \cref{sec:blockcompression} to compress the corresponding blocks of $\m{\Sigma}$ for each $\I\in\L_L$, yielding $\m{\Sigma} \approx \left(\prod_{\I\in\L_L} \m{L}_\I \m{U}_\I\right) \tilde{\m{\Sigma}}_L \left(\prod_{\I\in\L_L} \m{U}_\I^T \m{L}_\I^T\right),$
where the order taken in the product over $\L_L$ does not matter due to the structure of the $\m{L}$ and $\m{U}$ matrices.
Using $\rd_{L;i}$ and $\sk_{L;i}$ to denote the redundant DOFs and skeleton DOFs associated with $\I_{L;i}$ for each $i=1,\dots,4^L$ and defining $\rd_L\equiv \cup_{i=1}^{4^L} \rd_{L;i}$, the {blocks of the} matrix $\tilde{\m{\Sigma}}_L$ {have} the following structure {for each $i=1,\dots,4^L$}:
\begin{itemize}
\item The {modified block} $\tilde{\m{\Sigma}}_L(\rd_{L;i},\rd_{L;i})$ {has been decoupled} from the rest of $\tilde{\m{\Sigma}}_L$.
\item The block $\tilde{\m{\Sigma}}_L(\sk_{L;i},\sk_{L;i})$ has been modified.
\item The blocks $\tilde{\m{\Sigma}}_L(\sk_{L;i},\I^c_{L;i}\setminus\rd_L) = \m{\Sigma}(\sk_{L;i},\I^c_{L;i}\setminus\rd_L)$ and $\tilde{\m{\Sigma}}_L(\I_{L;i}^c\setminus\rd_L,\sk_{L;i}) = \m{\Sigma}(\I^c_{L;i}\setminus\rd_L,\sk_{L;i})$ remain unmodified from what they were in $\m{\Sigma}$.
\end{itemize}
In other words, we have identified and decoupled all redundant DOFs at the leaf level while leaving unchanged the blocks of $\m{\Sigma}$ corresponding to kernel evaluations between skeleton DOFs in distinct leaf-level subdomains.

\begin{figure}%
\centering
\begin{minipage}{0.3\textwidth}%
\centering
\includegraphics[width=\textwidth]{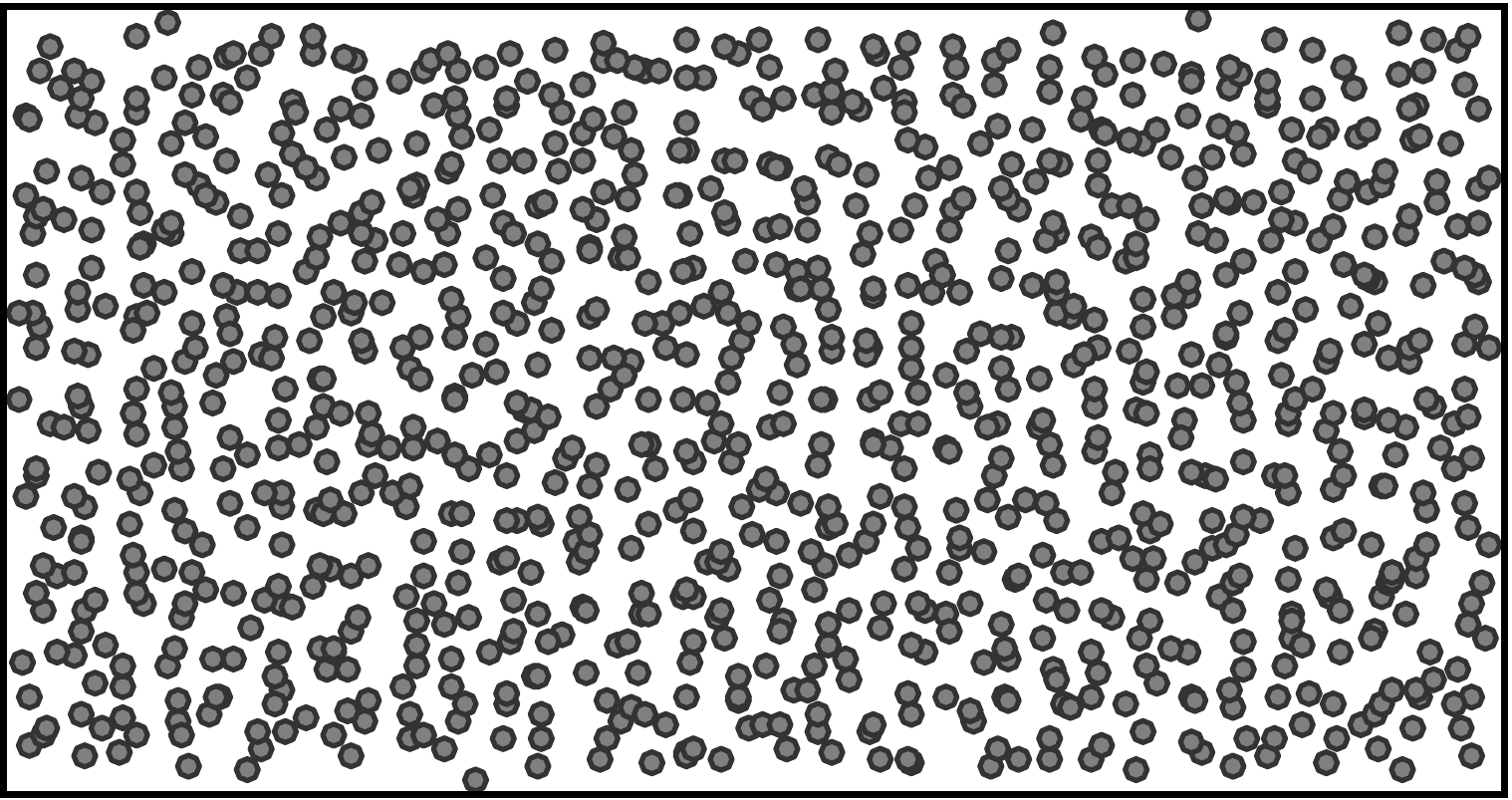}
\end{minipage}%
\quad
\begin{minipage}{0.3\textwidth}%
\centering
\includegraphics[width=\textwidth]{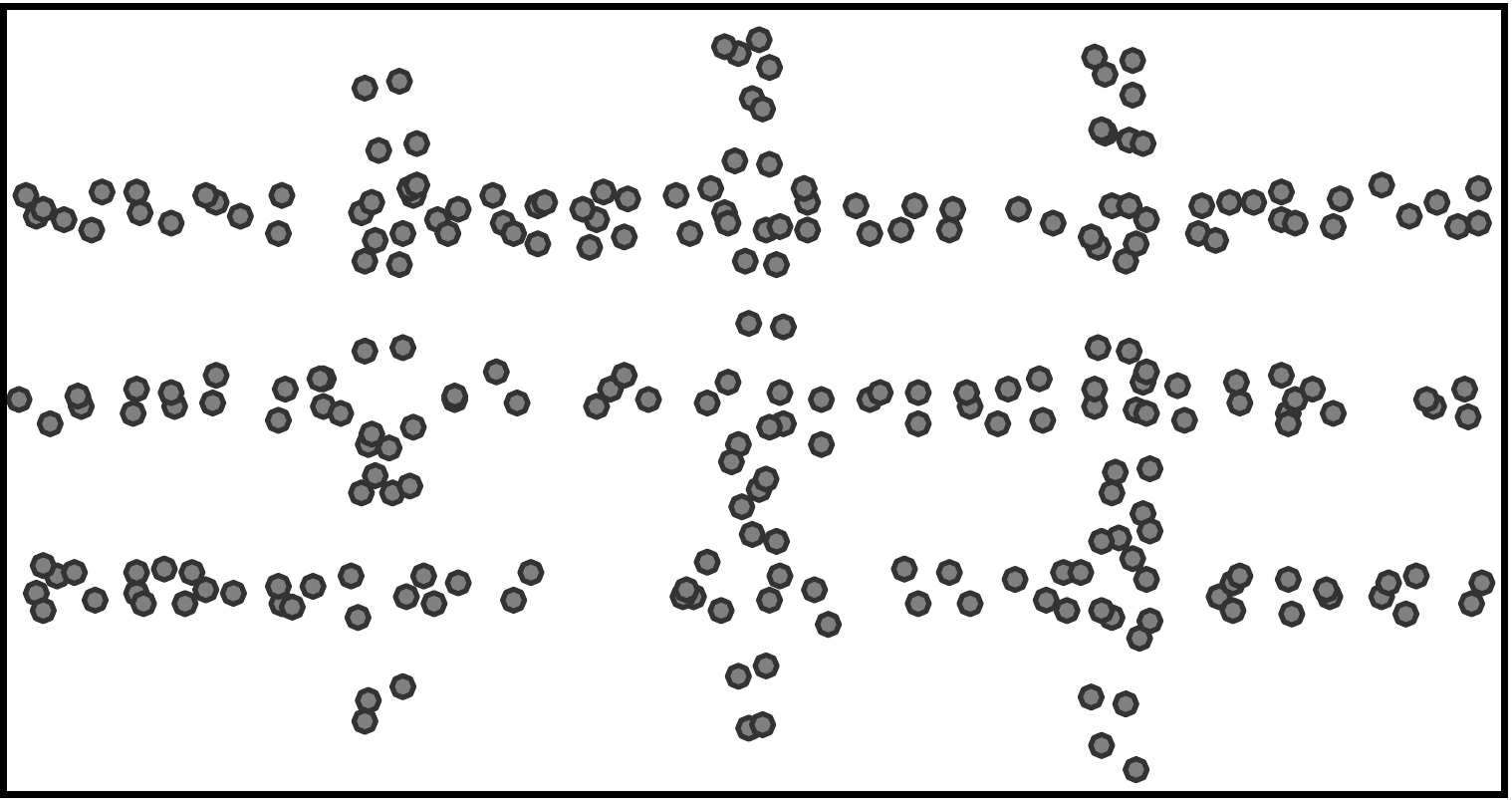}
\end{minipage}%
\quad
\begin{minipage}{0.3\textwidth}%
\centering
\includegraphics[width=\textwidth]{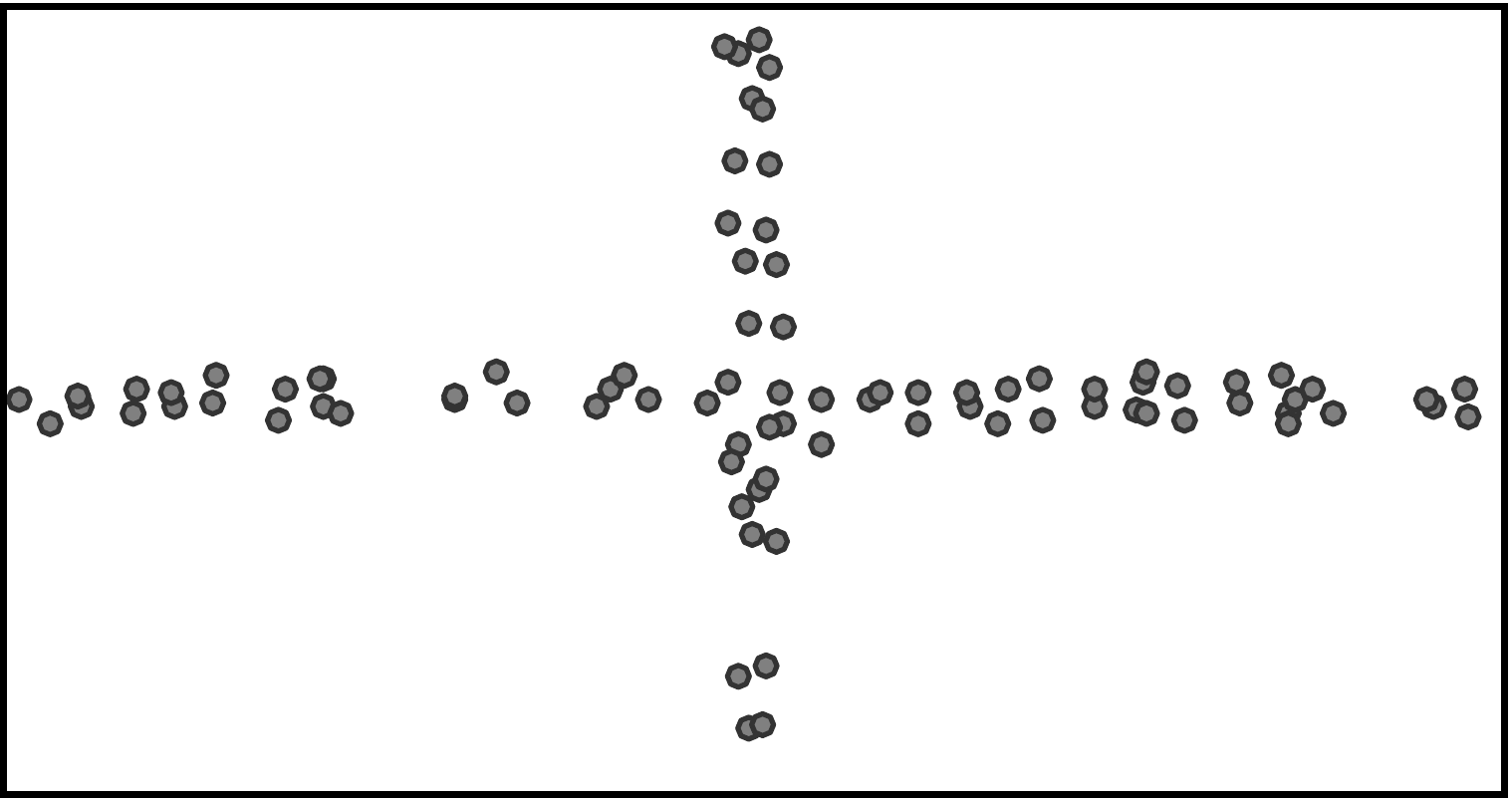}
\end{minipage}%
\caption{Shown here are the DOFs to be compressed at each level of the recursive skeletonization factorization.  At the leaf level (left), all DOFs are involved in skeletonization.  At each subsequent level (center, right), only skeleton DOFs from the previous level are involved in further skeletonization.  We see that the skeleton DOFs tend to line the boundaries of their corresponding subdomains. \label{fig:rskelfdofs}}
\end{figure}

The recursive skeletonization factorization of $\m{\Sigma}$ is given by repeating this process at each higher level of the quadtree.  For example, at level $\ell=L-1$ of the quadtree, each subdomain contains skeleton DOFs corresponding to its four distinct child subdomains in the tree.  However, the redundant DOFs of its child subdomains no longer need to be considered as they have already been decoupled.  We thus define $\tilde \I_{L-1;i}\equiv~\I_{L-1;i}\setminus\rd_L$ for each $i=1,\dots,4^{(L-1)}$ and write the collection of DOFs remaining at this level as $\L_{L-1}\equiv\{\tilde \I_{L-1;1}, \tilde \I_{L-1;2},\dots,\tilde\I_{L-1;4^{(L-1)}}\}$.  Due to the hierarchical block low-rank structure of $\m{\Sigma}$, off-diagonal blocks at this level are again compressible through skeletonization, yielding
\begin{align*}
\tilde{\m{\Sigma}}_L &\approx \m{P}_L\left(\prod_{\I\in\L_{L-1}} \m{L}_\I \m{U}_\I\right)\tilde{\m{\Sigma}}_{L-1} \left(\prod_{\I\in\L_{L-1}} \m{U}_\I^T \m{L}_\I^T\right)\m{P}_L^T,
\end{align*}
where $\m{P}_L$ is a global permutation matrix regrouping the DOF sets in $\L_{L-1}$ to be contiguous.  Proceeding in this fashion level-by-level {and defining the ordered product $\prod_{\ell=1}^L \m{A}_\ell \equiv  \m{A}_L\m{A}_{L-1}\dots\m{A}_{1}$,} we obtain the full recursive skeletonization factorization $\m{F}$ of $\m{\Sigma}$
\begin{align}\label{eq:rskelf}
\m{\Sigma} &\approx \left[\prod_{\ell=1}^{L} \left(\prod_{\I\in\L_\ell} \m{L}_\I \m{U}_\I\right)\m{P}_\ell\right] \tilde{\m{\Sigma}}_{1}\left[\prod_{\ell=1}^{L} \left(\prod_{\I\in\L_\ell} \m{L}_\I \m{U}_\I\right)\m{P}_\ell\right]^T\notag \\
&= \left[\prod_{\ell=1}^{L} \left(\prod_{\I\in\L_\ell} \m{L}_\I \m{U}_\I\right)\m{P}_\ell\right] \m{C}\m{C}^T\left[\prod_{\ell=1}^{L} \left(\prod_{\I\in\L_\ell} \m{L}_\I \m{U}_\I\right)\m{P}_\ell\right]^T \equiv \m{F},
\end{align}
where $\tilde{\m{\Sigma}}_{1}$ is block-diagonal with diagonal blocks corresponding to the sets of redundant DOFs at each level and $\tilde{\m{\Sigma}}_{1}=\m{C}\m{C}^T$ is the Cholesky decomposition of $\tilde{\m{\Sigma}}_1$.

\begin{remark}
Because the factorization $\m{F}\approx\m{\Sigma}$ is approximate, using an extremely inaccurate tolerance $\epsilon$ in the IDs of \cref{def:id} admits the possibility that $\tilde{\m{\Sigma}}_1$ may be slightly indefinite due to approximation error.  In practice, this is not an issue for any tolerance precise enough to be used for computing $\ell(\theta)$ for optimization purposes.  When factoring the derivative matrices $\m{\Sigma}_i$ for $i=1,\dots,p$ (which may themselves be indefinite), the Cholesky decomposition may be replaced with, \eg, an LDL$^T$ factorization.
\end{remark}

\subsection{Computational complexity}\label{sec:rskelfscaling}
The computational cost of the recursive skeletonization factorization is in theory dominated by the cost of computing IDs $\m{\Sigma}(\I^c,\rd)\approx\m{\Sigma}(\I^c,\sk)\m{T}_\I$ of $\m{\Sigma}(\I^c,\I)$ in \cref{sec:blockcompression} for each $\I$.  This is because the typical algorithm to compute an ID is based on a rank-revealing QR factorization, such that the $m \times |\I|$ ID of \cref{def:id} has complexity $\O(m|\I|^2)$ \cite{id}.  This dependence on $m$ is prohibitively expensive during initial levels of the factorization, since $m=|\I^c| = \O(n)$ and there are $\O(n)$ such IDs to compute.

The original application of skeletonization was to boundary integral equations arising from elliptic partial differential equations, in which case the so-called ``proxy trick'' described by Martinsson \& Rokhlin \cite{martinsson-rokhlin} can be applied to accelerate the computation of an ID through the use of integral identities.  These integral identities do not strictly apply in the case where $K(\cdot,\cdot;\theta)$ is a general covariance function, but we find that a variant of this proxy trick works well to obtain similar acceleration.

\subsubsection{Modified proxy trick}\label{sec:proxy}
Suppose that the index set $\I$ corresponds to points inside the subdomain $\rm{B}\subset\Omega$ in \cref{fig:proxy}, where we use $\rm{B}$ as a stand-in for an arbitrary subdomain $\Omega_{\ell;i}$ in our quadtree.  The purpose of computing an ID of $\m{\Sigma}(\I^c,\I)$ is to find a small set of skeleton DOFs $\sk\subset\I$ such that the range of $\m{\Sigma}(\I^c,\sk)$ approximately captures the range of $\m{\Sigma}(\I^c,\I)$.  The key to computational acceleration using the proxy trick is to accomplish this without using all rows of $\m{\Sigma}(\I^c,\I)$ in the computation.

As detailed by Ho \& Ying \cite[section 3.3]{hifie}, we can partition the DOFs $\I^c$ into those that are near to $\rm{B}$ and those that are far from $\rm{B}$, denoted $\N$ and $\F$ respectively.  For example, we may take $\N$ to be all points $x_i\in\I^c$ such that the distance between $x_i$ and the center of $\rm{B}$ is less than some radius.  The proxy trick proceeds by finding a surrogate representation $\m{M}(\Gamma,\I)$ for $\m{\Sigma}(\F,\I)$ in the ID computation, such that $\m{M}(\Gamma,\I)$ has many fewer rows than $\m{\Sigma}(\F,\I)$ and
\begin{align}\label{eq:proxyid}
\left[\begin{array}{c}\m{\Sigma}(\N,\rd)\\ \m{M}(\Gamma,\rd) \end{array}\right] \approx \left[\begin{array}{c}\m{\Sigma}(\N,\sk)\\ \m{M}(\Gamma,\sk) \end{array}\right]\m{T}_\I \implies \left[\begin{array}{c}\m{\Sigma}(\N,\rd)\\ \m{\Sigma}(\F,\rd) \end{array}\right] \approx \left[\begin{array}{c}\m{\Sigma}(\N,\sk)\\ \m{\Sigma}(\F,\sk) \end{array}\right]\m{T}_\I 
\end{align}
such that we may compute the left ID in \eqref{eq:proxyid} and get the right ID for ``free''.  

In the modified proxy trick, we let $\Gamma$ be a set of $n_\text{prox}$ points discretizing the gray annulus in the right of \cref{fig:proxy}.  {Crucially}, this differs from the original proxy trick due to the fact that we discretize a two-dimensional region (the annulus), whereas if our kernel satisfied some form of a Green's identity we could instead discretize a quasi-one-dimensional curve (a circle) around B as in the original proxy trick. Defining the matrix $\m{M}(\Gamma,\I)$ to have entries $K(y,x_i;\theta)$ with rows indexed by $y\in\Gamma$ and columns indexed by $x_i\in\I$, we observe that \eqref{eq:proxyid} holds without significant loss in accuracy even with $n_\text{prox}$ relatively small.  This brings the complexity of computing the right ID down to $\O(|\I|^3 + n_\text{prox}|\I|^2)$, which is beneficial when $n_\text{prox}$ is small compared to $|\I^c|$.  In practice, we take $n_\text{prox}$ to be constant with respect to the total number of points $n$.

\begin{figure}[ht]
\centering
\scalebox{2.75}{
		\begin{tikzpicture}
		\pgfmathsetseed{1234}
			\draw[draw=none, use as bounding box](0,0) rectangle (2,1);
			\draw[very thin] (0,0) rectangle (2,1);

			\foreach\i in {0,...,19}{
				\foreach\j in {0,...,9}{
	    			\filldraw[color=darkgray] (1/64+\i/10 + 0.075*random,1/64+\j/10 + 0.075*random) circle[radius=5e-3]{};
	    		}
	    	}
	    	\filldraw[color=white] (1-0.2,0.5-0.2) rectangle (1+0.2,0.5+0.2);
	    	\draw[very thin] (1-0.2,0.5-0.2) rectangle (1+0.2,0.5+0.2);
	    	\draw (1.0,0.5) node[scale=0.5] {B};
		\end{tikzpicture}
		}
		\hspace{1mm}
		\scalebox{2.75}{
		\begin{tikzpicture}
		\pgfmathsetseed{1234}
			\draw[draw=none, use as bounding box](0,0) rectangle (2,1);
			\draw[very thin] (0,0) rectangle (2,1);

			\filldraw[color=lightgray] (1,0.5) circle (0.49);
			\filldraw[color=white] (1,0.5) circle (0.35);

			\begin{scope}
			\clip (1,0.5) circle (0.36);
			\foreach\i in {0,...,19}{
				\foreach\j in {0,...,9}{
	    			\filldraw[color=darkgray] (1/64+\i/10 + 0.075*random,1/64+\j/10 + 0.075*random) circle[radius=5e-3]{};
	    		}
	    	}
	    	\end{scope}

	    	\draw[very thin] (1,0.5) circle (0.36);

	    		    	\filldraw[color=white] (1-0.2,0.5-0.2) rectangle (1+0.2,0.5+0.2);
	    	\draw[very thin] (1-0.2,0.5-0.2) rectangle (1+0.2,0.5+0.2);
	    	\draw (1.0,0.5) node[scale=0.5] {B};

	    	\draw[very thin] (1,0.5) circle (0.49);

	    	\foreach\r in {0.375,0.425,...,0.475}{
				\foreach\theta in {0,15,...,360}{
				\filldraw[color=black,shift={(1,0.5)}](\theta:\r) circle[radius=5e-3] {};
				}
				}

		\end{tikzpicture}
		}
		\caption{Because of the underlying kernel, computing an ID of the submatrix $\m{\Sigma}(\I^c,\I)$ can be accelerated, where $\I\subset[n]$ indexes observations inside the subdomain $\textrm{B}$ (left).  Rather than considering all of $\I^c$, the (modified) \emph{proxy trick} involves neglecting rows of  $\m{\Sigma}(\I^c,\I)$ corresponding to points $\F\subset \I^c$ far from $\textrm{B}$.  Instead, the ID is computed by considering only points $\N\subset\I^c$ near $\textrm{B}$ combined with a small number of so-called ``proxy points'' discretizing an annulus around $\textrm{B}$ (right).\label{fig:proxy}}
\end{figure}
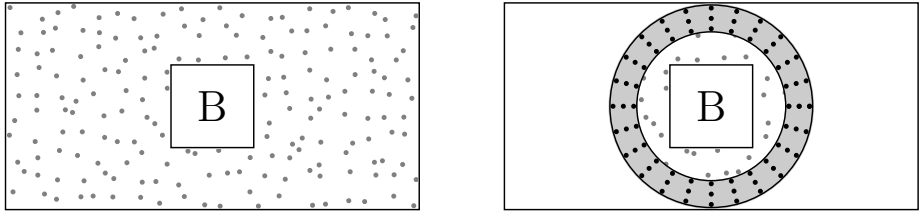

\subsubsection{Complexity sketch using the modified proxy trick}
Using the modified proxy trick, the cost of the recursive skeletonization factorization is essentially determined by the number of DOFs interior to each skeletonized subdomain, \ie, $|\sk_{\ell;i}|$ for each $\ell=1,\dots,L$ and $i=1,\dots,4^\ell$. As seen in \cref{fig:rskelfdofs}, the skeleton DOFs tend to line the boundaries of their corresponding subdomains.  This is statistically intuitive: due to the fact that our kernels of interest $K(\cdot,\cdot;\theta)$ decay smoothly, the subset of DOFs that best represent the covariance structure of a subdomain with the rest of the domain is the subset closest to the rest of the domain.

Assuming a uniform distribution of points $\{x_i\}_{i=1}^n$ and perfect quadtree, the average number of skeleton DOFs per subdomain at level $\ell$ is on the order of the sidelength of a subdomain in the quadtree at level $\ell$.  In other words, the number of skeleton DOFs per box grows by roughly a factor of two each time we step up a level in the tree and thus $s_\ell \equiv \frac{1}{4^\ell}\sum_{i=1}^{4^\ell} |\sk_{\ell;i}| = \O(2^{-\ell}).$
The assumptions that lead to this rank growth bound are described in more detail by Ho \& Greengard \cite[section 4]{rskel}; we do not go into them here.

\begin{theorem}[\cite{martinsson-rokhlin,hifie,rskel}]\label{thm:rs}
Assuming that the size of the skeleton sets behaves like $s_\ell=\O(2^{-\ell})$ for $\ell=1,\dots,L$ and $L\sim \log n$, the computational complexity of the recursive skeletonization factorization $\m{F}\approx\m{\Sigma}$ (with constants depending on the tolerance $\epsilon$ in \cref{def:id}) is $T_{\text{factor}} = \O(n^{3/2})$  and $T_{\text{apply}} = T_{\text{solve}} = \O(n\log n)$,
where $T_\text{factor}$ is the complexity of the factorization and $T_{\text{apply}}$ and $T_{\text{solve}}$ are the complexities of applying $\m{F}$ or $\m{F}^{-1}$ to a vector.  The storage complexity is $\O(n\log n)$.  
\end{theorem}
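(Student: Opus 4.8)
The plan is to account for all work and storage level by level in the quadtree and reduce the whole estimate to a pair of geometric sums in $\ell$. Throughout I use two structural facts already in hand: a perfect quadtree has exactly $4^\ell$ boxes at level $\ell$ (\cref{rem:tree}), and, under the stated rank hypothesis, the number of active DOFs in a box at level $\ell$ is $\O(s_\ell)$. I read the hypothesis $s_\ell = \O(2^{-\ell})$ as being normalized against the top-level size $\O(\sqrt n)$ --- equivalently $s_\ell = \O(\sqrt n\, 2^{-\ell})$ --- since $L\sim\log n$ together with constant leaf occupancy forces $4^L = \Theta(n)$ and hence $2^L = \Theta(\sqrt n)$. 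The two quantities I must bound per box are the cost of processing it during the factorization and the cost of applying or inverting the factors $\m{L}_\I,\m{U}_\I$ associated with it.

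First I would establish the per-box factorization cost. When a box $\I$ at level $\ell$ is reached, its active DOFs are exactly the skeletons inherited from its four children (at the leaf level, the $\O(1)$ original points), whence $|\I| = \O(s_{\ell+1}) = \O(s_\ell)$. Using the modified proxy trick of \cref{sec:proxy}, the ID of $\m{\Sigma}(\I^c,\I)$ in \cref{def:id} is computed from a matrix with $\O(|\I|)$ columns and only $\O(|\I| + n_{\text{prox}}) = \O(s_\ell)$ rows (near-field active DOFs plus a constant number of proxy points), at cost $\O(|\I|^3 + n_{\text{prox}}|\I|^2) = \O(s_\ell^3)$; the subsequent block eliminations and Schur-complement updates act on $\O(s_\ell)\times\O(s_\ell)$ blocks and cost $\O(s_\ell^3)$ as well. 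The key point is that the proxy trick removes the naive $|\I^c| = \O(n)$ factor, leaving no hidden dependence on $n$ in the per-box cost.

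Summing, $T_{\text{factor}} = \sum_{\ell=0}^{L} 4^\ell \cdot \O(s_\ell^3) = \O\!\left(n^{3/2}\sum_{\ell=0}^L 4^\ell 8^{-\ell}\right) = \O\!\left(n^{3/2}\sum_{\ell=0}^L 2^{-\ell}\right) = \O(n^{3/2})$, the geometric series converging and the bound being dominated by the coarsest levels. For the apply, solve, and storage costs I would instead charge $\O(s_\ell^2)$ to each box: the factors $\m{L}_\I,\m{U}_\I$ are block unit-triangular and act nontrivially only on the $\O(s_\ell)$ DOFs of box $\I$, so both applying them and storing their entries (together with the retained interpolation and Schur-complement blocks) costs $\O(s_\ell^2)$, and the final block-diagonal Cholesky factor $\m{C}$ contributes the same order. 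This yields $\sum_{\ell=0}^L 4^\ell \cdot \O(s_\ell^2) = \O\!\left(n\sum_{\ell=0}^L 1\right) = \O(nL) = \O(n\log n)$, and applying $\m{F}^{-1}$ merely replaces each factor by its inverse (triangular and Cholesky solves of the same block sizes), so that $T_{\text{solve}} = T_{\text{apply}}$.

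The genuinely hard part lies not in this accounting but in the rank estimate $s_\ell = \O(\sqrt n\, 2^{-\ell})$ that drives both sums, resting on the claim that skeletons concentrate along box boundaries so that $|\sk_{\ell;i}|$ scales like a box sidelength; this is taken as a hypothesis here, with the supporting analysis attributed to Ho \& Greengard \cite{rskel}, and I would not attempt to reprove it. The only real subtlety inside the accounting is the contrast the two sums expose: $\sum_\ell 4^\ell s_\ell^3$ is geometrically dominated by its $\ell=0$ term and produces the $n^{3/2}$ factorization cost, whereas $\sum_\ell 4^\ell s_\ell^2$ is flat in $\ell$ and therefore picks up the extra factor $L\sim\log n$, explaining why application, solution, and storage come in at $\O(n\log n)$ rather than $\O(n)$.
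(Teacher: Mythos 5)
Your accounting is correct and takes essentially the same route as the argument the paper sketches and defers to \cite{martinsson-rokhlin,hifie,rskel}: per-box ID cost $\O(s_\ell^3)$ via the modified proxy trick, summed geometrically over the $4^\ell$ boxes to give $\O(n^{3/2})$, and $\O(s_\ell^2)$ per box for the apply, solve, and storage costs, summing to $\O(nL)=\O(n\log n)$. Your normalization of the hypothesis as $s_\ell=\O(\sqrt{n}\,2^{-\ell})$ is the intended reading (the literal constant-in-$n$ version would collapse the sums to $\O(1)$), consistent with the paper's statement that the skeleton count scales with the box sidelength.
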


From \eqref{eq:rskelf} we see that the application of the factorization $\m{F}$ to a vector $x\in\R^n$ simply requires application of the block unit-triangular matrices $\m{L}_\I$ and $\m{U}_\I$ corresponding to each subdomain at each level as well as the block-diagonal Cholesky factor $\m{C}$ of $\tilde{\m{\Sigma}}_1$.  Further, the inverse of $\m{F}$ can be applied by noting that
\begin{align*}
\m{F}^{-1} &= \left[\prod_{\ell=L}^{1} \left(\prod_{\I\in\L_\ell} \m{P}_\ell^T\m{U}_\I^{-1}\m{L}_\I^{-1}\right)\right]^T \m{C}^{-T}\m{C}^{-1}\left[\prod_{\ell=L}^{1} \left(\prod_{\I\in\L_\ell} \m{P}_\ell^T\m{U}_\I^{-1}\m{L}_\I^{-1}\right)\right].
\end{align*}
Additionally, a generalized square root $\m{F}^{1/2}$ such that $\m{F}=\m{F}^{1/2}(\m{F}^{1/2})^T$ can be applied (as can its transpose or inverse) by taking
\begin{align*}
\m{F}^{1/2} &= \left[\prod_{\ell=1}^{L} \left(\prod_{\I\in\L_\ell} \m{L}_\I \m{U}_\I\right)\m{P}_\ell\right] \m{C}.
\end{align*}
Finally, the log-determinant of $\m{\Sigma}$ can be approximated by $\log |\m{\Sigma}| \approx \log |\m{F}| = 2\log |\m{C}|.$
\Cref{tab:complexity} summarizes the computational complexities for these operations, which essentially follow from \cref{thm:rs}.

\begin{table}
\caption{Complexity of the 2D recursive skeletonization factorization \label{tab:complexity}}
\centering
\ra{1.0}
\begin{tabular}{ll} \toprule
Operation               & Complexity \\ \midrule
Construct $\m{F} \approx \m{\Sigma}$ &     $\O\left(n^{3/2}\right) $     \\
Apply $\m{F}$ or $\m{F}^{-1}$ to a vector                          &  $\O(n\log n)$       \\ 
Apply $\m{F}^{1/2}$ or $\m{F}^{-1/2}$ to a vector                    & $\O(n \log n)$\\ 
Compute $\log |\m{F}|$ from $\m{F}$ & $\O(n\log n)$ \\ \bottomrule{}
\end{tabular}
\end{table}

By constructing the recursive skeletonization factorizations $\m{F}$ of $\m{\Sigma}$ and $\m{F}_i$ of $\m{\Sigma}_i$ for $i=1,\dots,p$, we see that after the $\O(n^{3/2})$ initial factorization cost each term in the evaluation of the log-likelihood or its gradient can be computed with cost $\O(n\log n)$ except for the product traces $\Tr(\m{\Sigma}^{-1}\m{\Sigma}_i)$ for $i=1,\dots,p$.  Further, through the approximate generalized square root $\m{F}^{1/2}$ of $\m{\Sigma}$ we can quickly sample from the distribution $N(0,\m{\Sigma})$.

\begin{remark}
While the recursive skeletonization factorization described here exploits the most well-justified rank assumptions on the covariance kernel $K(\cdot,\cdot;\theta)$, each recursive skeletonization factorization in our framework can be replaced by the closely-related \emph{hierarchical interpolative factorization} (HIF) \cite{hifie} or \emph{strong recursive skeletonization factorization} \cite{rss}, which are observed in practice to exhibit better scaling properties and also admit simple log-determinant computation.
\end{remark}

\section{Computing the trace terms}\label{sec:peel}
There are a number of methods for estimating the term $\Tr(\m{\Sigma}^{-1}\m{\Sigma}_i)$ appearing in each gradient component $g_i$ for $i=1,\dots,p$.  Employing the recursive skeletonization factorizations $\m{F}$ of $\m{\Sigma}$ and $\m{F}_i$ of $\m{\Sigma}_i$, the product $\m{\Sigma}^{-1}\m{\Sigma}_i\approx \m{F}^{-1}\m{F}_i$ (or a symmetrized form with the same trace) can be applied to a vector with complexity $\O(n\log n)$.  Using $\m{G}$ to denote this black-box linear operator, the classical statistical approach is the estimator of Hutchinson \cite{hutchinson}.  Drawing random vectors $u_i \in \{-1,1\}^n$ for $i=1,\dots,q$ such that the components of $u_i$ are independent and take value $\pm1$ with equal probability, the Hutchinson trace estimator is
\begin{align}\label{eq:hutch}
\Tr(\m{G}) \approx \frac{1}{q}\sum_{i=1}^q u_i^T\m{G}u_i,
\end{align}
which is unbiased with variance decaying as $1/q$ and has cost $\O(qn\log n)$.  For low-accuracy estimates of the trace, the Hutchinson estimator is simple and computationally efficient, but for higher accuracy it proves computationally infeasible to use the Hutchinson approach because of the slow rate of convergence in $q$, see \cref{sec:results}.

When $\m{\Sigma}$ and $\m{\Sigma}_i$ have hierarchical {rank structure}, it is reasonable to also look for hierarchical rank structure in the product $\m{\Sigma}^{-1}\m{\Sigma}_i$, as matrix inversion and multiplication preserve such rank structure (albeit with different ranks) in many cases \cite{hackbook}.  In our framework, we use the matrix peeling algorithm of Lin \etal \cite{Lin2011} for constructing an explicit $\H$-matrix representation of a fast black-box operator $\m{G}$.  At a high level, the method proceeds by applying the operator $\m{G}$ to random vectors drawn with a specific sparsity structure to construct an approximate representation of the off-diagonal blocks at each level.  We recursively perform low-rank compression level-by-level, following the same quadtree hierarchy as in the recursive skeletonization factorization, albeit in a top-down traversal rather than bottom-up.  Finally, at the bottom level of the tree, the diagonal blocks corresponding to leaf-level subdomains can be extracted and their traces computed.  While the full algorithm is applicable to both the strongly-admissible and weakly-admissible setting, the version of the algorithm we detail here is efficient for the simple weakly-admissible case.  We point the reader to Lin \etal \cite{Lin2011} for more details related to the modifications required for strong admissibility.

The use of a randomized method for computing low-rank representations of matrices, which we review below, is integral to the peeling algorithm.

\subsection{Randomized low-rank approximations}\label{sec:randomlra}
To begin, suppose that matrix $\m{A}\in\R^{m_1\times m_2}$ has (numerical) rank $r$ and that we wish to construct an explicit rank-$r$ approximation $\m{A}\approx \m{U}_1\m{M}\m{U}_2^T$
with $\m{U}_1\in\R^{m_1\times r}$, $\m{U}_2\in\R^{m_2\times r},$ and $\m{M}\in\R^{r\times r}$.  In the context of approximating the trace terms, for example, $\m{A}$ will be an off-diagonal block of $\m{\Sigma}^{-1}\m{\Sigma}_i$ or perhaps of a related symmetrized form with the same trace. Here we provide an overview of an algorithm that accomplishes this goal.

We begin by constructing approximations to the column space and row space of the matrix $\m{A}$. Let $c$ be a small integer and suppose $\m{W}_1 \in\R^{m_2\times(r+c)}$ and $\m{W}_2\in\R^{m_1\times(r+c)}$ are appropriately chosen random matrices, the distribution of which we will discuss later. Following Halko \etal \cite{halko}, let $\m{U}_1$ be a well-conditioned basis for the column space of $\m{A}\m{W}_1$ and $\m{U}_2$ be a well-conditioned basis for the column space of $\m{A}^T\m{W}_2$ constructed via, \eg, column-pivoted QR factorizations.  Using the Moore-Penrose pseudoinverse, we obtain a low-rank approximation according to the approach summarized by Lin \etal \cite[subsection 1.2]{Lin2011} via
\begin{equation}\label{eq:lowrankapprox}
\m{A} \approx \m{U}_1 \left[(\m{W}_2^T\m{U}_1)^\dagger (\m{W}_2^T\m{A}\m{W}_1)(\m{U}_2^T\m{W}_1)^\dagger\right] \m{U}_2^T = \m{U}_1 \m{M} \m{U}_2^T.
\end{equation}
Perhaps surprisingly, with an appropriate choice of $W_1$ and $W_2$ it is the case that with high probability this approximation is \emph{near-optimal}, in the sense that
\begin{align*}
\|\m{A}-\m{U}_1 \m{M} \m{U}_2^T\|_2 \le \alpha(m_1,m_2,c) \|\m{A} - \m{A}_{r,\text{best}}\|_2,
\end{align*}
where $\m{A}_{r,\text{best}}$ is the best rank-$r$ approximation of $\m{A}$ and $\alpha(m_1,m_2,c)$ is a small factor dependent on $c$ and the size of $\m{A}$.  Further, the approximation process can be monitored and controlled adaptively to ensure a target desired accuracy \cite{halko}.

It remains to discuss the choice of distribution for $\m{W}_1$ and $\m{W}_2$. The most common and straightforward choice is for both to have i.i.d.\ $N(0,1)$ entries, which guarantees the strongest analytical error bounds and highest success probability.  Under this choice, one can show that the algorithm as stated takes $\O(T_\text{apply}r + nr^2 + r^3)$, where $T_\text{apply}$ is the complexity of applying $\m{A}$ to a vector.  This is sufficiently fast for our purposes, though we note that it is possible to accelerate this using other distributions \cite{rokhlin-tygert,halko,tropp}.

\subsection{Matrix peeling for weakly-admissible matrices}\label{sec:peelsub}

For simplicity, we assume a perfect quadtree as in \cref{rem:tree}.  Further, we will assume that the numerical ranks of the off-diagonal blocks to a specified tolerance $\epsilon_\text{peel}$  are known $\emph{a priori}$ at each level, such that off-diagonal blocks of $\m{G}$ at level $\ell$ have numerical rank at most $r_\ell$.  In practice, an adaptive procedure is used to find the ranks.  Finally, we assume that $\m{G}$ is symmetric, since if the trace of nonsymmetric $\m{G}$ is required we can always instead consider a symmetrized form with the same trace such as $\frac{1}{2}(\m{G}+\m{G}^T)$.

\subsubsection{First level of peeling algorithm}\label{sec:level1}

To begin, at level $\ell=1$ the domain is partitioned into four subdomains $\Omega_{1;i}$ with corresponding index sets $\peelidx{1}{i},$ $i=1,\dots,4$ as in \cref{fig:rankstructure}.  We follow the style of Lin \etal \cite{Lin2011} and write the off-diagonal blocks at this level as $\G{1}{ij} \equiv \m{G}\left(\peelidx{1}{i},\peelidx{1}{j}\right)$
to make our notation less cumbersome.

To construct randomized low-rank approximations of $\G{1}{ij}$ for $i\ne j$, we need to find the action of these off-diagonal blocks on random matrices as described in \cref{sec:randomlra}.  Define the block-sparse matrices
\begin{align*}
\m{W}^{(1)}_1 &\equiv \left[\begin{array}{c}\W{1}{1}\\0\\0\\0\end{array}\right],\;\m{W}^{(1)}_2 \equiv \left[\begin{array}{c}0\\\W{1}{2}\\0\\0\end{array}\right],\;
\m{W}^{(1)}_3 \equiv \left[\begin{array}{c}0\\0\\\W{1}{3}\\0\end{array}\right],\; \m{W}^{(1)}_4 \equiv \left[\begin{array}{c}0\\0\\0\\\W{1}{4}\end{array}\right],
\end{align*}
where $\W{1}{j}$ is a random matrix of dimension $|\peelidx{1}{j}| \times (r_1+c)$ for $j=1,\dots,4$.  Applying $\m{G}$ to $\m{W}_1^{(1)}$ gives the action of $\G{1}{1j}$ on the random matrix $\W{1}{1}$ for $j=1,\dots,4$, since
\begin{align*}
\left[\begin{array}{cccc}
\G{1}{11} & \G{1}{12} & \G{1}{13} & \G{1}{14}\\
\G{1}{21} & \G{1}{22} & \G{1}{23} & \G{1}{24}\\
\G{1}{31} & \G{1}{32} & \G{1}{33} & \G{1}{34}\\
\G{1}{41} & \G{1}{42} & \G{1}{43} & \G{1}{44}
\end{array} \right]
\left[\begin{array}{c} \W{1}{1} \\  \\  \\ \\
\end{array}\right] =
\left[\begin{array}{c} \G{1}{11}\W{1}{1}\\
\G{1}{21}\W{1}{1}\\
\G{1}{31}\W{1}{1}\\
\G{1}{41}\W{1}{1}
\end{array}\right].
\end{align*}
The top block of the right-hand side vector above is unused as it is involves a diagonal block of $\m{G}$.   However, the remaining blocks are exactly the matrices $\G{1}{i1}\W{1}{1}$ for $i\ne 1$ as required by the randomized low-rank approximation of \cref{sec:randomlra}.  Applying $\m{G}$ to each $\m{W}^{(1)}_j$ for $j=1,\dots,4$, for each block $\G{1}{ij}$ with $i\ne j$ we obtain a random sampling of its column space $\G{1}{ij}\W{1}{j}$.  Note that by symmetry of $\m{G}$ we also obtain a random sampling of the row space of each block since $\G{1}{ij}\W{1}{j} = \G{1}{ji}^T\W{1}{j}$.

Using \eqref{eq:lowrankapprox} to construct rank-$r_1$ approximations of each of these blocks, we write the approximation of $\G{1}{ij}$ as $\G{1}{ij} \approx \Ghat{1}{i}{j} \equiv \U{1}{ij} \M{1}{ij} \U{1}{ji}^T,$
where the approximation is accurate to the specified tolerance $\epsilon_\text{peel}$ with high probability.

Defining the matrix $\m{G}^{(1)}\in\R^{n\times n}$ with blocks given by
\begin{align*}
\m{G}^{(1)}\left(\peelidx{1}{i}, \peelidx{1}{j}\right) = \left\{\begin{array}{ll} \Ghat{1}{i}{j} & i\ne j,  \\ 0 & \text{else},\end{array}\right.
\end{align*}
we obtain
\begin{align*}
\m{G} - \m{G}^{(1)} &\equiv \m{G} - \left[\begin{array}{cccc}
& \Ghat{1}{1}{2} & \Ghat{1}{1}{3}  & \Ghat{1}{1}{4}\\
\Ghat{1}{2}{1} &  & \Ghat{1}{2}{3} & \Ghat{1}{2}{4}\\
\Ghat{1}{3}{1} & \Ghat{1}{3}{2} &  & \Ghat{1}{3}{4}\\
\Ghat{1}{4}{1} & \Ghat{1}{4}{2} & \Ghat{1}{4}{3} &
\end{array} \right]\approx \left[\begin{array}{cccc}
\G{1}{11} &  &  & \\
 & \G{1}{22} &  & \\{}
 & & \G{1}{33} & \\
 & &  & \G{1}{44}
\end{array} \right].
\end{align*}
In other words, we have approximated the off-diagonal blocks at this level to a specified accuracy and used the result to obtain a fast operator $\m{G}-\m{G}^{(1)}$ that is block-diagonal with diagonal blocks the same as those of $\m{G}$.

\begin{remark}
We note that the matrix $\m{G}^{(1)}$ is not explicitly assembled as a dense matrix inside the peeling algorithm.  Instead, we store the non-zero blocks in low-rank form so that $\m{G}^{(1)}$ may be efficiently applied to vectors.
\end{remark}

\subsubsection{Second level of peeling algorithm}\label{sec:level2}
In the next step of the peeling algorithm, we recurse on the diagonal subblocks $\G{1}{ii}$ for $i=1,\dots,4$.  Partitioning each subdomain $\Omega_{1;i}$ at level $\ell=1$ into four child subdomains at level $\ell=2$ using the quadtree structure and renumbering blocks accordingly, we write the diagonal blocks $\G{1}{ii}$ for $i=1,\dots,4$ as
\begin{align*}
\G{1}{11} &= \left[\begin{array}{cccc}
\G{2}{11} & \G{2}{12} & \G{2}{13} & \G{2}{14}\\
\G{2}{21} & \G{2}{22} & \G{2}{23} & \G{2}{24}\\
\G{2}{31} & \G{2}{32} & \G{2}{33} & \G{2}{34}\\
\G{2}{41} & \G{2}{42} & \G{2}{43} & \G{2}{44}
\end{array} \right], \;
\G{1}{22} = \left[\begin{array}{cccc}
\G{2}{55} & \G{2}{56} & \G{2}{57} & \G{2}{58}\\
\G{2}{65} & \G{2}{66} & \G{2}{67} & \G{2}{68}\\
\G{2}{75} & \G{2}{76} & \G{2}{77} & \G{2}{78}\\
\G{2}{85} & \G{2}{86} & \G{2}{87} & \G{2}{88}
\end{array} \right],
\end{align*}
and so on for $\G{1}{33}$ and $\G{1}{44}$.

For each $j=1,\dots,16$ we define the random matrix $\W{2}{j}\in\R^{|\peelidx{2}{j}|\times(r_{2}+c)}$, which is appropriately sized to give a random sample of the column space of $\G{2}{ij}$ for each $i=1,\dots,16$, $i\ne j$.  We can minimize the number of times we apply the operator $\m{G}-\m{G}^{(1)}$ as follows due to its block diagonal structure.  For each $k=1,\dots,4$, we define $\m{W}^{(2)}_k\in\R^{n\times (r_{2}+c)}$ to have rows divided into 16 blocks according to
\begin{align*}
\m{W}^{(2)}_k(\peelidx{2}{j},:) = \left\{\begin{array}{ll} \W{2}{j} & j\in\{k,k+4,k+8,k+12\}, \\ 0 & \text{else.}\end{array}\right.
\end{align*}
In other words, block $k$ of $\m{W}^{(2)}_k$ is nonzero, as is every fourth block after $k$.

\begin{definition}[Quadtree siblings]\label{def:sibs}
In the context of the quadtree decomposition of $\Omega$, we say that $\Omega_{\ell;i}$ and $\Omega_{\ell;j}$ are \emph{siblings}  if $i\ne j$ and both $\Omega_{\ell;i}\subset\Omega_{\ell-1;k}$ and $\Omega_{\ell;j}\subset\Omega_{\ell-1;k}$ for some $1\le k \le 4^{(\ell-1)}$.
\end{definition}

Let $\m{B}_k\equiv(\m{G} - \m{G}^{(1)})\m{W}^{(2)}_k$ and suppose that $\m{W}^{(2)}_k(\peelidx{2}{j},:)$ is nonzero.  For each $i$ such that $\Omega_{2;i}$ and $\Omega_{2;j}$ are siblings, we have $\m{B}_k(\peelidx{2}{i},:) \approx  \G{2}{ij}\W{2}{j}.$
For example, in $\m{W}^{(2)}_1$ the nonzero blocks are $\m{W}^{(2)}_1(\peelidx{2}{j},:) = \W{2}{j}$ for $j\in\{1,5,9,13\}$, so 
\begin{align*}
\m{B}_1(\peelidx{2}{i},:)  \approx \left\{\begin{array}{cl} \G{2}{i1}\W{2}{1}& i=1,\dots,4,\\
\G{2}{i5}\W{2}{5}& i=5,\dots,8,\\
\G{2}{i9}\W{2}{9}& i=9,\dots,12,\\
\G{2}{i13}\W{2}{13}& i=13,\dots,16.\end{array} \right.
\end{align*}
Therefore, applying $\m{G}-\m{G}^{(1)}$ to $\m{W}^{(2)}_k$ for $k=1,\dots,4$ gives a random sample of the column space and row space of $\G{2}{ij}$ for each $i$ and $j$ such that $\Omega_{2;i}$ and $\Omega_{2;j}$ are siblings.  For all such $i$ and $j$ we use the randomized low-rank approximation algorithm as before to construct
\begin{align*}
\Ghat{2}{i}{j}= \U{2}{ij} \M{2}{ij} \U{2}{ji}^T.
\end{align*}  
Defining $\m{G}^{(2)}\in\R^{n\times n}$ with blocks
\begin{align*}
\m{G}^{(2)}\left(\peelidx{2}{i}, \peelidx{2}{j}\right) = \left\{\begin{array}{ll} \Ghat{2}{i}{j} & \text{if $\Omega_{2;i}$ and $\Omega_{2;j}$ are siblings},  \\ 0 & \text{else},\end{array}\right.
\end{align*}
we have that $\m{G}-\m{G}^{(1)} - \m{G}^{(2)}$ is approximately block-diagonal with diagonal blocks $\G{2}{ii}$ for $i=1,\dots,16$.

\subsubsection{Subsequent levels of peeling algorithm}
In general at level $\ell>2$ we see that $\m{G} - \sum_{m=1}^{\ell-1} \m{G}^{(m)}$ is approximately block-diagonal with $4^{(\ell-1)}$ diagonal blocks.  For each $k=1,\dots,4$ we define $\m{W}^{(\ell)}_k\in\R^{n\times (r_{\ell}+c)}$ to have rows divided into $4^\ell$ blocks according to
\begin{align*}
\m{W}^{(\ell)}_k(\peelidx{\ell}{j},:) = \left\{\begin{array}{ll} \W{\ell}{j} & j\equiv k \text{ (mod 4)}, \\ 0 & \text{else,}\end{array}\right.
\end{align*}
where each $\W{\ell}{j}$ is a random matrix of size $\R^{|\peelidx{\ell}{j}|\times r_\ell}.$  Using the same logic as in \cref{sec:level2}, we apply $\m{G} - \sum_{m=1}^{\ell-1}\m{G}^{(m)}$ to $\m{W}^{(\ell)}_k$ for each $k=1,\dots,4$ and use the results to construct low rank approximations
\begin{align*}
\Ghat{\ell}{i}{j}= \U{\ell}{ij} \M{\ell}{ij} \U{\ell}{ji}^T.
\end{align*}
for each $i$ and $j$ such that $\Omega_{\ell;i}$ and $\Omega_{\ell;j}$ are siblings.  We define
\begin{align*}
\m{G}^{(\ell)}\left(\peelidx{\ell}{i}, \peelidx{\ell}{j}\right) = \left\{\begin{array}{ll} \Ghat{\ell}{i}{j} & \text{if $\Omega_{\ell;i}$ and $\Omega_{\ell;j}$ are siblings},  \\ 0 & \text{else}\end{array}\right.
\end{align*}
such that $\m{G} - \sum_{m=1}^{\ell}\m{G}^{(m)}$ is approximately block-diagonal with $4^\ell$ diagonal blocks.

\subsubsection{Extracting the trace}
At the bottom level of the quadtree, each diagonal block of $\m{G}-\sum_{m=1}^{L} \m{G}^{(m)}$ is of a constant size independent of $n$ as discussed in \cref{sec:blockcompression}.  Define $\nn{L}{i}\equiv |\peelidx{L}{i}|$ and $n_L \equiv \max_i \,\nn{L}{i}$ such that $n_L$ is the maximum number of observations in a leaf-level subdomain. We construct a block matrix $\m{E}\in\R^{n\times n_L}$ such that
\begin{align*}
\m{E}\left(\peelidx{L}{i},\left[\nn{L}{i}\right]\right) = \m{I}\in\R^{|\peelidx{L}{i}|\times|\peelidx{L}{i}|}
\end{align*}
for each $i$, where $\m{I}$ is an appropriately-sized identity matrix. Letting 
\begin{align*}
\m{H}\equiv\left(\m{G}-\sum_{m=1}^{L} \m{G}^{(m)}\right)\m{E}, 
\end{align*} we find that $\m{H}\left(\peelidx{L}{i},\left[\nn{L}{i}\right]\right) \approx \G{L}{ii}$
for each $i = 1,\dots,4^L$.  We can then approximate the trace of $\m{G}$ using the relation
\begin{align*}
\Tr(\m{G}) = \sum_{i=1}^{4^{L}} \Tr(\m{G}_{L;ii}) \approx \sum_{i=1}^{4^L} \Tr(\m{H}\left(\peelidx{L}{i},\left[\nn{L}{i}\right]\right)).
\end{align*}

\begin{remark}
When using the peeling algorithm to construct an approximate trace of an operator $\m{G}$ with numerically low-rank off-diagonal blocks, it is important to note that we do not have direct control of the relative error of the trace approximation.   This is because a matrix with diagonal entries with large absolute value but mixed signs can have a small trace due to cancellation.  In practice, however, our numerical results in \cref{sec:results} show excellent agreement between the approximate trace and true trace.
\end{remark}

\subsection{Computational complexity}\label{sec:peelcomplexity}

For each level of the weak-admissibility-based peeling algorithm described  in \cref{sec:peelsub} there are two key steps: applying the operator $\m{G} - \sum_{m=1}^{\ell-1} \m{G}^{(m)}$ and forming the low-rank factorizations $\Ghat{\ell}{i}{j}$ for each $i$ and $j$ such that $\Omega_{\ell;i}$ and $\Omega_{\ell;j}$ are siblings.  Analyzing the cost of these steps leads to the following complexity result.

\begin{theorem}\label{thm:peelcomplexity}
Let the cost of applying $\m{G}\in\R^{n\times n}$ to a vector be $T_\text{apply}$ and assume that the observations are uniformly distributed in $\Omega$ such that $|\peelidx{\ell}{i}| = \O(4^{-\ell}n)$ for each $1\le\ell\le L$ and $1\le i \le 4^\ell$.  Assuming that the ranks of the off-diagonal blocks $\G{\ell}{i}{j}$ are bounded by $r_\ell$ for each $i$ and $j$ such that $\Omega_{\ell;i}$ and $\Omega_{\ell;j}$ are siblings, and define 
\begin{align*}
s_1\equiv \sum_{\ell=1}^L r_\ell, \quad \text{and} \quad s_2\equiv \sum_{\ell=1}^L r_\ell^2.
\end{align*}
Then the complexity of the weak-admissibility-based peeling algorithm is
\begin{align}\label{eq:peeltime}
T_\text{peel}=\O(T_\text{apply}s_1 +ns_2\log n).
\end{align}
The storage complexity is $\O(ns_1)$.
\end{theorem}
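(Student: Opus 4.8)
The plan is to bound $T_\text{peel}$ by summing the work done at each level $\ell=1,\dots,L$ of the quadtree and to account for storage separately. At each level the algorithm performs three operations, which I would analyze one at a time: (i) applying the black-box operator $\m{G}$ to the four random test matrices $\m{W}^{(\ell)}_k$, each of width $\O(r_\ell)$; (ii) subtracting the contributions of the coarser peeled levels by applying $\sum_{m=1}^{\ell-1}\m{G}^{(m)}$ to those same test matrices; and (iii) forming the rank-$r_\ell$ factorizations $\Ghat{\ell}{i}{j}$ for every sibling pair using the randomized scheme of \cref{sec:randomlra}. The final leaf-level step that assembles $\m{H}$ and sums the diagonal traces touches $\O(n)$ entries and is negligible, so the bound comes entirely from summing (i)--(iii).

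The two ``easy'' contributions are (i) and (iii). For (i), one application of $\m{G}$ to a vector costs $T_\text{apply}$, and across the four $\m{W}^{(\ell)}_k$ we hit $\O(r_\ell)$ vectors, giving $\O(T_\text{apply}\,r_\ell)$ per level and $\O\!\left(T_\text{apply}\sum_\ell r_\ell\right)=\O(T_\text{apply}\,s_1)$ after summing. For (iii), the samples $\G{\ell}{ij}\W{\ell}{j}$ are already available, so the apply cost in \cref{sec:randomlra} is free and each block requires only column-pivoted QRs and the pseudoinverse products forming $\M{\ell}{ij}$, at cost $\O(|\peelidx{\ell}{i}|\,r_\ell^2 + r_\ell^3)$. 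Here I would use $|\peelidx{\ell}{i}|=\O(4^{-\ell}n)$ together with the fact that a numerical rank cannot exceed a block dimension, i.e.\ $r_\ell=\O(4^{-\ell}n)$, so the $r_\ell^3$ term is dominated; with $\O(4^\ell)$ sibling pairs the level cost is $\O(4^\ell\cdot 4^{-\ell}n\,r_\ell^2)=\O(n\,r_\ell^2)$, summing to $\O(n\,s_2)$.

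The hard part is step (ii), and it is precisely where the extra $\log n$ appears. Each $\m{G}^{(m)}$ consists of $\O(4^m)$ low-rank blocks of rank $r_m$ over index sets of size $\O(4^{-m}n)$, so applying it to a width-$w$ matrix costs $\O(4^m\cdot 4^{-m}n\,r_m\,w)=\O(n\,r_m w)$. At level $\ell$ we apply every $\m{G}^{(m)}$ with $m<\ell$ to width $w=\O(r_\ell)$, giving $\O\!\left(n\,r_\ell\sum_{m=1}^{\ell-1}r_m\right)$, and summing over $\ell$ produces the cross term $\O\!\left(n\sum_{m<\ell}r_m r_\ell\right)$. The key estimate is to convert this into the stated form: applying $r_m r_\ell\le\tfrac12(r_m^2+r_\ell^2)$ and counting multiplicities gives $\sum_{m<\ell}r_m r_\ell\le\tfrac12(L-1)\sum_\ell r_\ell^2=\O(L\,s_2)$, and since $L=\O(\log n)$ this is $\O(s_2\log n)$. (Equivalently, $s_1^2\le L\,s_2$ by Cauchy--Schwarz.) Thus step (ii) contributes $\O(n\,s_2\log n)$, which subsumes the $\O(n\,s_2)$ of step (iii), and combining all three gives $T_\text{peel}=\O(T_\text{apply}\,s_1+n\,s_2\log n)$.

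For storage, we keep the peeled operators $\{\m{G}^{(\ell)}\}_{\ell=1}^L$: at level $\ell$ each of the $\O(4^\ell)$ low-rank blocks needs $\O(4^{-\ell}n\,r_\ell+r_\ell^2)=\O(4^{-\ell}n\,r_\ell)$ words, hence $\O(n\,r_\ell)$ per level and $\O\!\left(n\sum_\ell r_\ell\right)=\O(n\,s_1)$ in total. The test and sample matrices used at any single level occupy only $\O(n\,r_\ell)=\O(n\,s_1)$ transiently and can be discarded between levels, so they do not affect the peak bound, yielding the stated $\O(n\,s_1)$ storage complexity.
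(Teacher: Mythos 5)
Your proposal is correct and follows essentially the same level-by-level accounting as the paper: apply-$\m{G}$ cost $\O(T_\text{apply}r_\ell)$, subtraction of prior low-rank levels, and $\O(nr_\ell^2)$ for the randomized factorizations, summed over $L=\O(\log n)$ levels, with the identical storage argument. Your treatment of the cross term via $\sum_{m<\ell}r_mr_\ell\le\tfrac12(L-1)s_2$ (equivalently $s_1^2\le Ls_2$) is in fact slightly more careful than the paper's per-level assertion that $\sum_{m=1}^{\ell-1}r_mr_\ell=\O(s_2)$, but it lands on the same bound $\O(ns_2\log n)$.
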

\begin{proof}
We adapt the proof of Lin \etal \cite{Lin2011} to the weak admissibility case.  At the first level, applying $\m{G}$ to each $\m{W}^{(1)}_k$ costs $\O(T_\text{apply}r_1)$ and each randomized factorization costs $\O(nr_1^2)$, leading to an overall cost for level 1 of $\O(T_\text{apply}r_1 + nr_1^2)$.

 At level $\ell > 1$, we break the cost of applying $\m{G}-\sum_{m=1}^{\ell-1}\m{G}^{(m)}$ into two pieces.  The cost of applying $\m{G}$ to each $\m{W}^{(\ell)}_k$ is $\O(T_\text{apply}r_\ell)$.  The matrix $\sum_{m=1}^{\ell-1}\m{G}^{(m)}$ is a heavily structured matrix with blocks in low-rank form.  Applying this to each $\m{W}^{(\ell)}_k$ costs $\O\left(\sum_{m=1}^{\ell-1} nr_mr_\ell\right)$, which is $\O(ns_2)$.  We additionally must construct each randomized factorization at this level.  Each one costs $\O(4^{-\ell}nr_\ell^2)$ and there are $\O(4^\ell)$ off-diagonal blocks to compress at this level, so the overall cost for level $\ell$ is $\O(T_\text{apply}r_\ell + ns_2+nr_\ell^2)$.

 Summing the cost of each level from $\ell=1,\dots,L$, we obtain \eqref{eq:peeltime}.  Note that at level $\ell=L$, we must additionally extract the diagonal blocks, but by the assumption these blocks are of constant size so this does not increase the asymptotic cost.  The storage complexity comes from noting that at level $\ell$ we must store the $\O(4^\ell)$ matrices of rank $r_\ell$, where each has outer dimension that is $\O(4^{-\ell}n).$
\end{proof}

When the underlying matrix has the rank of all off-diagonal blocks  bounded by $r_\ell = \O(1)$ for all $\ell$, then the computational complexity of weak peeling is $\softO(T_\text{apply} +n)$, where we use the so-called ``soft-O'' notation from theoretical computer science to suppress factors that are polylogarithmic in $n$.  {In this case,} peeling $\m{G}=\m{\Sigma}$ itself using its recursive skeletonization factorization results in $\softO(n)$ complexity for both time and memory{.}

Many real matrices of interest, however, do not exhibit off-diagonal blocks with ranks independent of $n$.  For example, our experiments with the Mat\'ern kernel of \eqref{eq:matern} show that a constant number of off-diagonal blocks at each level of the hierarchy exhibit ranks bounded only as $r_\ell = \O(2^{-\ell}\sqrt{n})$.  This coincides with the argument for rank growth in the recursive skeletonization factorization in \cref{sec:rskelfscaling}.  Thus, this simplified peeling algorithm in the case of the Mat\'ern kernel has asymptotic time complexity $\softO(n^2)$ and storage complexity $\softO(n^{3/2})$, where we pick up at most a polylogarithmic factor in the ranks since we are looking at $\m{G}=\m{\Sigma}^{-1}\m{\Sigma}_i$ {and not $\m{\Sigma}$ itself}.  In theory, using the simple peeling algorithm described here is asymptotically no better than extracting the trace by applying $\m{G}$ to the coordinate vectors $e_i$ for $i=1,\dots,n$.  This necessitates the standard form of peeling for large problems. 

\begin{remark}
In practice, the standard form of the peeling algorithm \cite{Lin2011} that uses the full generality of strong admissibility can be employed to remedy such rank growth by explicitly avoiding compression of off-diagonal blocks that are not sufficiently low-rank.  Using the modifications described in that paper, the complexity of peeling follows the same bound as \cref{thm:peelcomplexity} but with the rank bound $r_\ell$ referring to a bound on the ranks of only those blocks that are compressed in the strongly-admissible hierarchical format.  {We find in \cref{sec:peelingruntime} that using peeling based on strong admissibility is more efficient when $n$ is large, as expected.  However, the implicit constants in the asymptotic runtime lead to weak admissibility being more efficient for moderately-sized problems.}
\end{remark}

 We summarize our complexity results in \cref{tab:peeling}.  Note that these results were derived on the assumption that $n_\ell=\O(4^{-\ell}n)$, \ie, a quasi-uniform distribution of observations and a perfect quadtree decomposition of space.  In practice observations that are distributed in a different fashion can actually exhibit better behavior, particularly if the observations are concentrated around a quasi-one-dimensional curve \cite{hifie}.

\setlength\extrarowheight{2pt}
\begin{table}
\ra{1.0}
\caption{The runtime and storage complexity of the peeling algorithm depend on the asymptotic rank $r_\ell$ of off-diagonal blocks of $\m{G}$ at level $\ell$.  The tabulated complexities are based on the assmption that the recursive skeletonization factorization is used to apply $\m{G}$ as a fast operator.\label{tab:peeling}}
\centering
\begin{tabular}{lll} \toprule
$r_\ell$ & Time & Storage \\ \midrule
$\O(\log n_\ell)$  &$\softO(n)$   & $\softO(n)$  \\
$\O(\sqrt n_\ell)$ &$\softO\left(n^2\right)$& $\softO(n^{3/2})$ \\ \bottomrule
\end{tabular}
\end{table}

\section{Summary of MLE framework}\label{sec:summary}
In \cref{alg:mle} we  summarize our complete framework for computing the log-likelihood $\ell(\theta)$ and gradient $g(\theta)$ given $\theta$, which can be used inside of any first-order optimization routine for Gaussian process maximum likelihood estimation.  As mentioned previously, the approach is flexible and does not rely on the specific hierarchical factorization used (\eg, the recursive skeletonization factorization, the hierarchical interpolative factorization, the strong recursive skeletonization factorization) or the form of peeling used (\ie, the peeling based on weak admissibility described in \cref{sec:peel} or the form by Lin \etal \cite{Lin2011} based on strong admissibility).  Rather, the exact components of the framework should be decided on a case-by-case basis depending on the rank properties of the kernel family.

\begin{algorithm}
 \small
  \caption{Computing the Gaussian process log-likelihood and gradient}
  \label{alg:mle}
  \begin{algorithmic}[1]
   \Statex Given: observation vector $z\in\R^n$, observation locations $\{x_i\}_{i=1}^n \subset \R^2$, peel tolerance $\epsilon_\text{peel}$, factorization tolerance $\epsilon_\text{fact} < \epsilon_\text{peel}$, parameter vector $\theta\in\R^p$, and covariance kernel $K(\cdot,\cdot;\theta)$
   \State{\texttt{// Factor $\m{\Sigma}$ with hierarchical factorization}}
   \State $\m{F} \gets$ Recursive skeletonization factorization of $\m{\Sigma}$ with tolerance $\epsilon_\text{fact}$
   \State{\texttt{// Use fast hierarchical solve and log-determinant}}
   \State $\hat\ell(\theta) \gets -\frac{1}{2}z^T\m{F}^{-1}z - \frac{1}{2}\log |\m{F}| - \frac{1}{2}\log 2\pi \approx \ell(\theta)$
   \For{$i = 1,\dots, p$}
	   \State{\texttt{// Factor $\m{\Sigma}_i$ with hierarchical factorization}}
	   \State $\m{F}_i \gets$ Recursive skeletonization factorization of $\m{\Sigma}_i$ with tolerance $\epsilon_\text{fact})$
	   \State{\texttt{// Compute trace of $\m{\Sigma}^{-1}\m{\Sigma}_i$ with peeling algorithm}}
	   \State $t_i \gets$ Trace of operator $\frac{1}{2}(\m{F}^{-1}\m{F}_i + \m{F}_i\m{F}^{-1})$ via peeling algorithm with tolerance $\epsilon_\text{peel}$
	   \State{\texttt{// Use fast hierarchical apply and solve}}
	   \State $\hat g_i \gets \frac{1}{2}z^T\m{F}^{-1}\m{F}_i\m{F}^{-1}z - \frac{1}{2}t_i \approx g_i$
   \EndFor
   \Statex Output: $\hat \ell(\theta)$ and $\hat g(\theta)$
  \end{algorithmic}
 \end{algorithm}

 {
 \begin{remark}\label{remark:conditional}
After estimation of the parameter vector $\theta$, there remains the question of how to sample from the Gaussian process conditioned on the observed data $z$.  Assuming $[z',z]^T$ is jointly distributed according to the original Gaussian process, this conditional distribution is given by
\begin{align*}
z'|z \sim N(\m{\Sigma}_{12}\m{\Sigma}_{22}^{-1}z,\, \m{\Sigma}_{11}-\m{\Sigma}_{12}\m{\Sigma}_{22}^{-1}\m{\Sigma}_{12}^T),
\end{align*}
where $\m{\Sigma}_{11}$ is the covariance matrix of $z'$, $\m{\Sigma}_{22}$ is the covariance matrix of $z$, and so on.  Using the identity
\begin{align*}
\m{\Sigma}_{11}-\m{\Sigma}_{12}\m{\Sigma}_{22}^{-1}\m{\Sigma}_{12}^T &=\left[\begin{array}{cc} \m{I} & -\m{\Sigma}_{12}\m{\Sigma}_{22}^{-1} \end{array}\right]\left[\begin{array}{cc}\m{\Sigma}_{11} & \m{\Sigma}_{12}\\ \m{\Sigma}_{12}^T & \m{\Sigma}_{22} \end{array}\right]\left[\begin{array}{c} \m{I}\\ -\m{\Sigma}_{22}^{-1}\m{\Sigma}_{12}^T  \end{array}\right]
\end{align*}
and letting $\m{\Sigma}$ denote the two-by-two block matrix in a slight abuse of notation, we can apply a square-root of $\m{\Sigma}_{11}-\m{\Sigma}_{12}\m{\Sigma}_{22}^{-1}\m{\Sigma}_{12}^T$ with skeletonization factorizations $\m{F}_{22}\approx\m{\Sigma}_{22}$ and $\m{F}\approx \m{\Sigma}$ by using $\m{F}^{1/2}$ to apply a square-root of $\m{\Sigma}$, $\m{F}$ to apply $\m{\Sigma}_{12}$ through appropriate padding, and $\m{F}^{-1}_{22}$ to apply $\m{\Sigma}_{22}^{-1}$.  This gives a fast method for sampling from the conditional distribution or computing the conditional mean.
\end{remark}
}

\section{Numerical results}\label{sec:results}

To demonstrate the effectiveness of our approach to Gaussian process maximum likelihood estimation, we first test the accuracy and runtime of the peeling-based technique for approximating the trace and then test our full method on two examples using synthetic datasets and one example using a dataset of measurements of ocean surface temperatures.  For examples, we take the number of proxy points to be $n_\text{prox}=256,$ and use a {quadtree} decomposition of space with a maximum of $n_\text{occ}=64$ points per leaf subdomain.

In our tests we use the FLAM library (\url{https://github.
com/klho/FLAM/}) for the recursive skeletonization factorization and a custom implementation of matrix peeling as described in \cref{sec:peelsub}. This additional code is {available} at \url{https://github.com/asdamle/GPMLE/}.  All numerical results shown were run in MATLAB\textsuperscript{\textregistered} R2015a on a quad-socket Intel\textsuperscript{\textregistered} Xeon\textsuperscript{\textregistered} E5-4640 processor clocked at 2.4 GHz using up to 1.5 TB of RAM.

\subsection{Runtime scaling of the peeling algorithm}\label{sec:peelingruntime}
To begin, we investigate the numerical performance of the peeling algorithm on synthetic examples.  We take the observation locations $\{x_i\}_{i=1}^n$ to be a $\sqrt{n} \times \sqrt{n}$ grid of points uniformly discretizing the square $[0,100]^2\subset \R^2.$  We let $\theta=[\theta_1,\theta_2]$ parameterize the correlation length scale of the process in each coordinate direction, defining the scaled distance
\begin{align*}
\|x-y\|^2_\theta = \frac{(x_1-y_1)^2}{\theta_1^2} + \frac{(x_2-y_2)^2}{\theta_2^2},
\end{align*}
where here $x_i$ and $y_i$ are used to denote components of vectors $x$ and $y$.  Using this parameterization and incorporating an additive noise term, the two kernels we test are the {rational quadratic kernel of \eqref{eq:rationalquadratic} with $\alpha=1/2$,
\begin{align}\label{eq:specrq}
K_{RQ}(x,y;\theta) &= \left(1+\|x-y\|^2_\theta\right)^{-1/2} + \sigma_N^2 \delta_{xy},
\end{align}
}
and the Mat\'ern kernel of \eqref{eq:matern} with parameter $\nu=3/2$,
\begin{align}\label{eq:specmatern}
K_M(x,y;\theta) &= (1 + \sqrt{3}\|x-y\|_\theta)\exp(-\sqrt{3}\|x-y\|_\theta) + \sigma_N^2 \delta_{xy}.
\end{align}
Here $\delta_{xy}$ is the Kronecker {delta,} which satisfies $\delta_{xy}=1$ if $x=y$ and $\delta_{xy}=0$ otherwise.

\begin{remark}
In both \eqref{eq:specrq} and \eqref{eq:specmatern} the additional term $\sigma_N^2 \delta_{xy}$ can be interpreted as modeling additive white noise with variance $\sigma_N^2$ on top of the base Gaussian process model.  In practice, this so-called ``nugget effect'' is frequently incorporated to account for measurement error or small-scale variation from other sources \cite{matheron} and, further, is numerically necessary for many choices of parameter $\theta$ due to exceedingly poor conditioning of many kernel matrices.
\end{remark}
We compute high-accuracy recursive skeletonization factorizations of the matrices $\m{\Sigma}$ and $\m{\Sigma}_1\equiv\frac{\partial}{\partial\theta_1}\m{\Sigma}$, which we combine to obtain the fast black-box operator
\begin{align}\label{eq:Gbb}
\m{G} = \frac{1}{2}(\m{\Sigma}^{-1}\m{\Sigma}_1 + \m{\Sigma}_1\m{\Sigma}^{-1})
\end{align}
for input to the peeling algorithm to compute the trace to specified tolerance $\epsilon_\text{peel}=1\times 10^{-6}$.  We choose the parameter vector $\theta=[10,7]$ for these examples as in \cref{fig:gpdiff} (left), and set the noise parameter at $\sigma_N^2=1\times 10^{-4}$.

\begin{figure}%
\centering
\begin{minipage}{0.4\textwidth}%
\centering
\includegraphics[width=\textwidth]{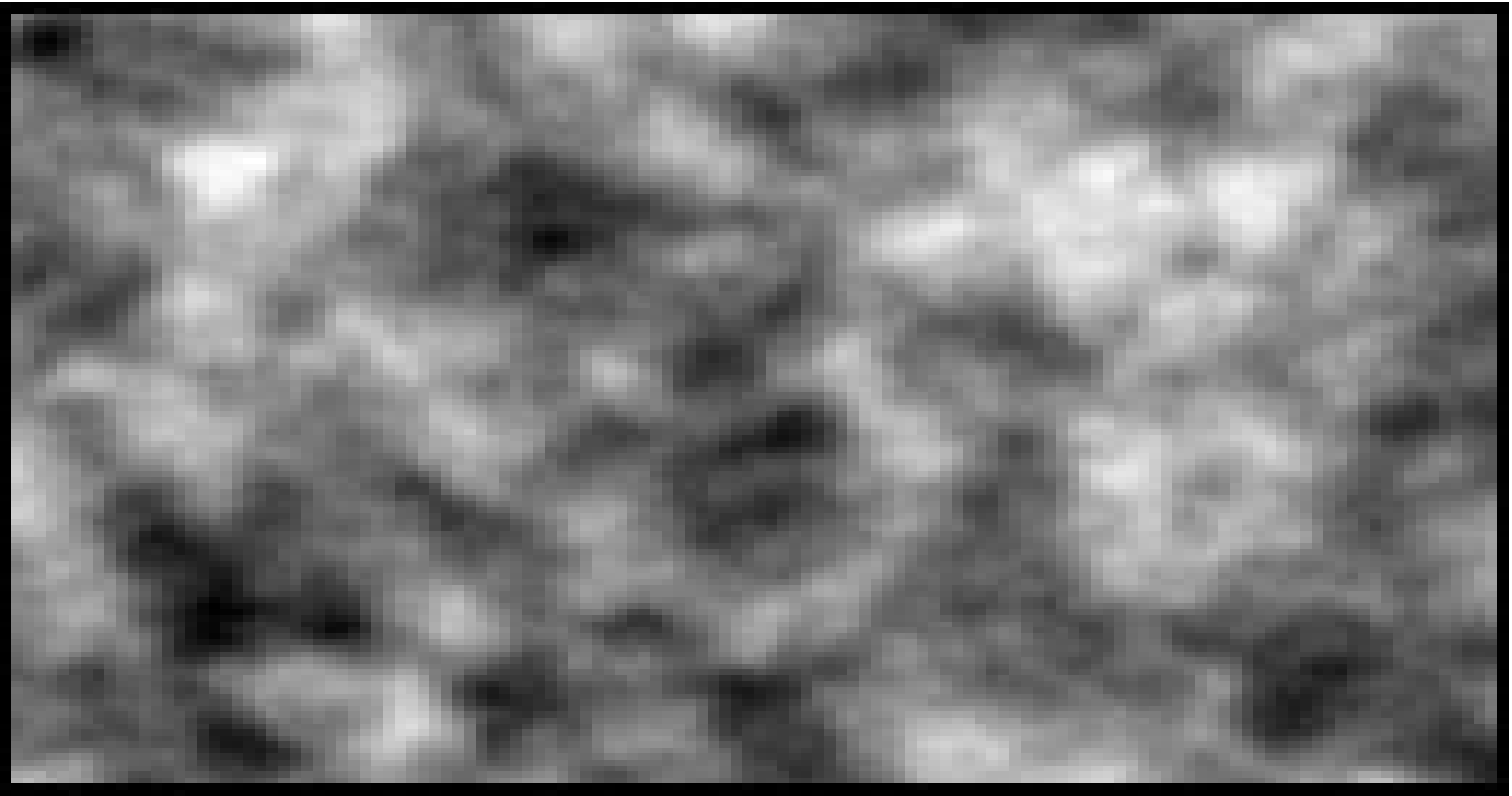}
\end{minipage}%
\quad\quad\quad\quad
\begin{minipage}{0.4\textwidth}%
\centering
\includegraphics[width=\textwidth]{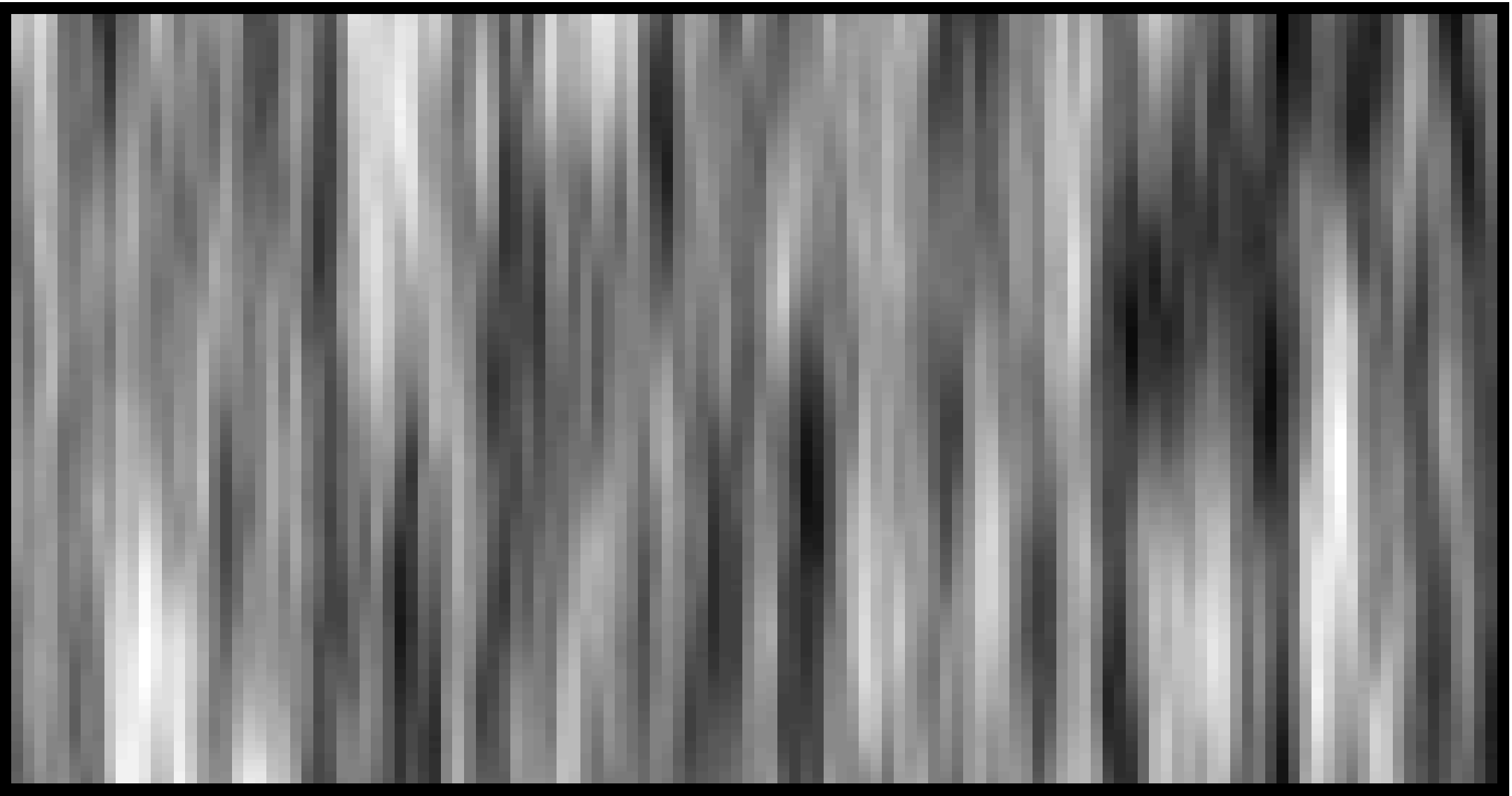}
\end{minipage}%
\caption{ Two different realizations on the domain $[0,200]\times[0,100]$ of the Mat\'ern kernel Gaussian process with covariance seen in \eqref{eq:specmatern} and noise parameter $\sigma_N^2=0$.  In the left figure the parameter vector is $\theta=[10,7]$ corresponding to a kernel that is relatively close to isotropic.  In contrast, in the right figure the parameter vector $\theta=[3,30]$ generates strong anisotropy.\label{fig:gpdiff}}
\end{figure}

Beginning with the {rational quadratic} kernel, in \cref{tab:peelingresults_rq} we give runtime results for both the simplified peeling algorithm described in \cref{sec:peelsub} (``weak peeling'') as well as the full strong-admissibility-based peeling algorithm of Lin \etal \cite{Lin2011} (``strong peeling'').  As can be seen in \cref{fig:peel_prof} (left), the runtime of the peeling algorithm with the kernel \eqref{eq:specrq} seems to scale between $\O(n)$ and $\O(n^{3/2})$ with the number of observations $n$, regardless of whether weak or strong peeling is used.  Further, the relative error in the trace approximation, $e_\text{peel}$ is {near the specified tolerance $\epsilon_\text{peel}$, though the tolerance is not a hard upper bound.}  Note that we omit the relative error for our largest example, as the operator was too large to determine the true trace using the na\"ive approach.

\begin{table}
\caption{Runtime $t_\text{peel}$ of the the peeling algorithm with the {rational quadratic} kernel of \eqref{eq:specrq}.  Note that we omit the relative error $e_\text{peel}$  in the estimated trace for our largest example, as the operator was too large to determine the true trace using the na\"ive approach.\label{tab:peelingresults_rq}}
\centering
\ra{1.0}
\begin{tabular}{ccccc} \toprule
$n$ & $t_\text{peel,weak}$ (s)& $e_\text{peel,weak}$  & $t_\text{peel,strong}$ (s) & $e_\text{peel,strong}$ \\ \midrule
 $64^2$ & $9.17\times 10^0$ & $5.68\times 10^{-7}$ & $4.85\times10^1$ & $1.60\times 10^{-7}$ \\ 
 $128^2$& $9.16\times 10^1$ & $1.02\times 10^{-5}$ & $5.64\times 10^2$ & $2.58\times 10^{-7}$ \\ 
 $256^2$& $6.63\times 10^2$ & $3.72\times 10^{-5}$ & $3.04\times 10^3$ & $3.32\times 10^{-6}$ \\
 $512^2$& $2.88\times 10^3$ & - & $1.73\times 10^4$ & - \\ \bottomrule
\end{tabular}
\end{table}

In contrast, the results in \cref{tab:peelingresults_matern} for the Mat\'ern kernel in \eqref{eq:specmatern} show different scaling behavior for weak and strong peeling.  In \cref{fig:peel_prof} (right), we see that the runtime for weak peeling seems to be close to quadratic in the number of observations, which agrees with our analysis from \cref{sec:peel}.  Using strong peeling, however, the complexity of peeling scales considerably better, ultimately following the $\O(n^{3/2})$ trend line.  We see again that the relative trace error is well-controlled by $\epsilon_\text{peel}$ in both cases.

\begin{table}
\caption{Runtime $t_\text{peel}$ of the the peeling algorithm with the Mat\'ern kernel of \eqref{eq:specmatern}.  Note that we omit the relative error $e_\text{peel}$ in the estimated trace for our largest example, as the operator was too large to determine the true trace using the na\"ive approach.\label{tab:peelingresults_matern}}
\centering
\ra{1.0}
\begin{tabular}{ccccc} \toprule
 $n$ & $t_\text{peel,weak}$ (s)& $e_\text{peel,weak}$ & $t_\text{peel,strong}$ (s) & $e_\text{peel,strong}$ \\ \midrule
 $64^2$ & $6.03\times 10^0$  & $5.73\times 10^{-8}$ & $2.06\times 10^1$ & $4.78\times 10^{-10}$ \\
 $128^2$& $5.30\times 10^1$ & $2.46\times 10^{-7}$ & $2.29\times 10^2$ & $3.36\times 10^{-10}$ \\ 
 $256^2$& $5.37\times 10^2$ & $4.28\times 10^{-6}$  & $1.62\times 10^3$ & $8.14\times 10^{-10}$ \\
 $512^2$&  $7.07 \times 10^3$ & - &  $1.00\times 10^4$ &-  \\ \bottomrule
\end{tabular}
\end{table}

Though the observed scaling behavior of strong peeling is as good or better than that for weak peeling for both kernels, in practice we see that for problems with up to a quarter of a million observations weak peeling has a smaller time-to-solution.  As such, in the remainder of our examples we show results using only weak peeling.

\begin{figure}%
\centering
\begin{minipage}{0.45\textwidth}%
\centering
\includegraphics[width=\textwidth]{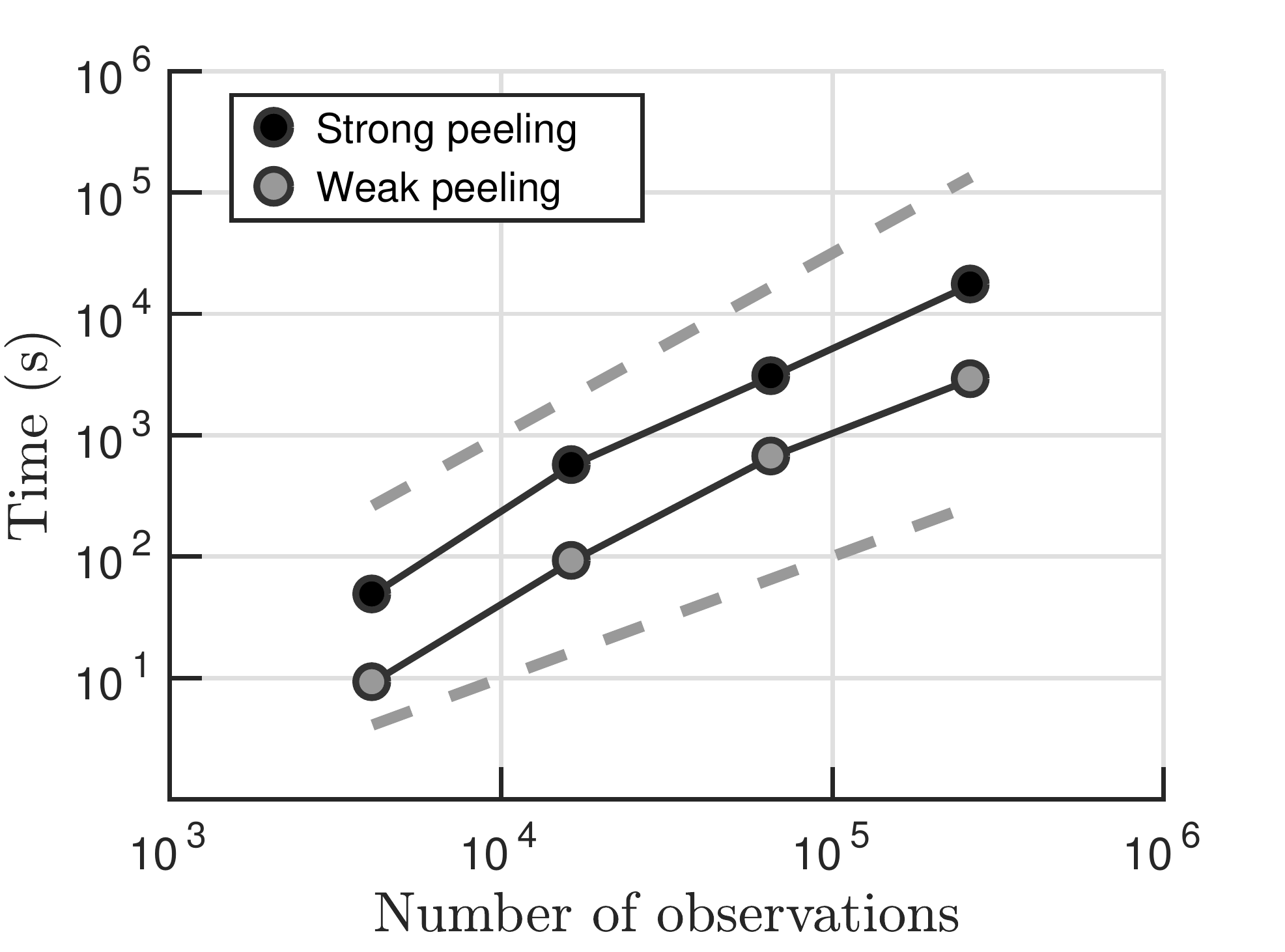}
\end{minipage}%
\quad\quad\quad
\begin{minipage}{0.45\textwidth}%
\centering
\includegraphics[width=\textwidth]{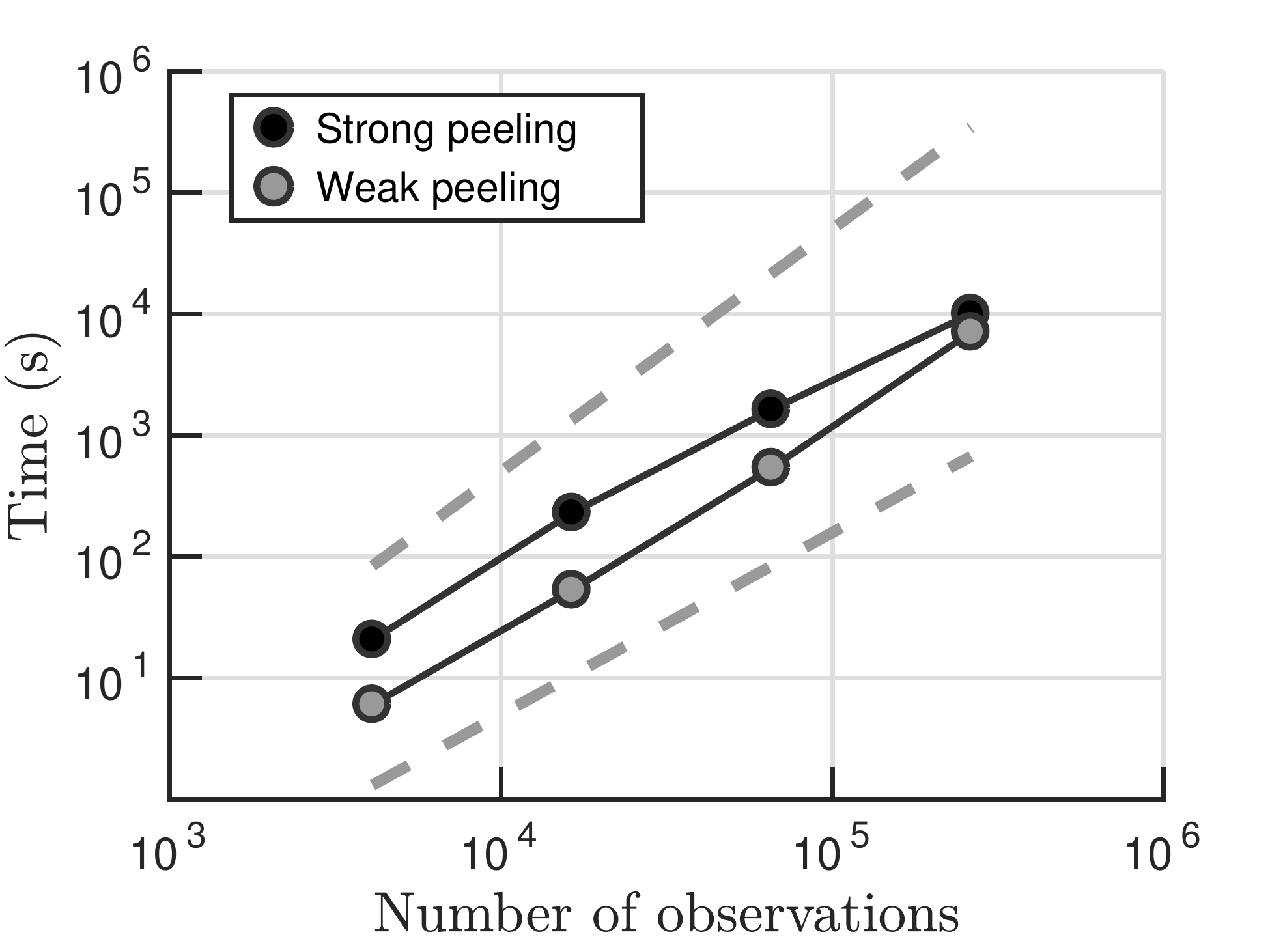}
\end{minipage}%

\caption{On the left the runtime of peeling for the {rational quadratic} kernel is plotted along with a $\O(n^{3/2})$ trend line (top) and a $\O(n)$ trend line (bottom), {showing subquadratic scaling for weak peeling in this case}.  In contrast, on the right the runtime of peeling for the Mat\'ern kernel is plotted along with a $\O(n^2)$ trend line (top) and a $\O(n^{3/2})$ trend line (bottom).  We see that weak peeling with the Mat\'ern kernel seems to ultimately exhibit quadratic scaling, whereas strong peeling seems to exhibit slightly better than $\O(n^{3/2})$ scaling.  {The corresponding data are given in \cref{tab:peelingresults_rq,tab:peelingresults_matern}}. \label{fig:peel_prof}}
\end{figure}

\subsection{Relative efficiency of peeling versus the Hutchinson estimator}\label{sec:hutch}
As discussed in \cref{sec:peel}, a common alternative statistical approach for approximating the trace of a matrix {$\m{G}$} is the estimator of Hutchinson \cite{hutchinson} seen in \eqref{eq:hutch}.  The aim of this section is to show that for matrices with hierarchical low-rank structure our peeling-based algorithm can be much more efficient when a high-accuracy trace approximation is desired.

As in \cref{sec:peelingruntime}, we take our observations to be a regular grid discretizing $[0,100]^2\subset \R^2$ using the Mat\'ern kernel of \eqref{eq:specmatern} with noise $\sigma_N^2 = 1\times 10^{-4}$ and parameter vector $\theta = [10,7]$.  We fix the number of observations at $n=64^2$ and consider how the accuracy of the trace approximation varies with the number of applications of the black-box operator for both weak peeling and the Hutchinson estimator.

Using a high-accuracy recursive skeletonization factorization to construct the black-box operator in \eqref{eq:Gbb} as in \cref{sec:peelingruntime}, we vary the tolerance $\epsilon_\text{peel}$ in the peeling algorithm and plot in \cref{fig:hutch_prof} the relative error in the trace approximation as a function of both the number of black-box applies and total peeling runtime.  Additionally, for the Hutchinson estimator we use the same factorizations to construct the unsymmetric operator $\m{G}' = \m{\Sigma}^{-1}\m{\Sigma}_1$.  We plot the same quantities for a given instantiation of the estimator for comparison.

For low-accuracy approximations with relative error on the order of $1\times 10^{-1}$ to $1\times 10^{-3}$, we see that the Hutchinson estimator is a competitive alternative to the peeling algorithm for finding the trace.  When increased accuracy is desired, however, it is clear that in our examples that the peeling algorithm is the more attractive option.  While the Hutchinson estimator has a simple form and is easy to compute, the relatively slow inverse square root convergence means that $M$ in \eqref{eq:hutch} must be taken to be exceedingly large to drive the variance down to reasonable levels, whereas the peeling algorithm is observed to make more economical use of its black-box matrix-vector products.  It is worth noting that, for this choice of $n$, only 4096 applies are needed to explicitly construct all diagonal entries of the operator via application to the identity, though this is not feasible for larger $n$.

\begin{figure}%
\centering
\begin{minipage}{0.4\textwidth}%
\centering
\includegraphics[width=\textwidth]{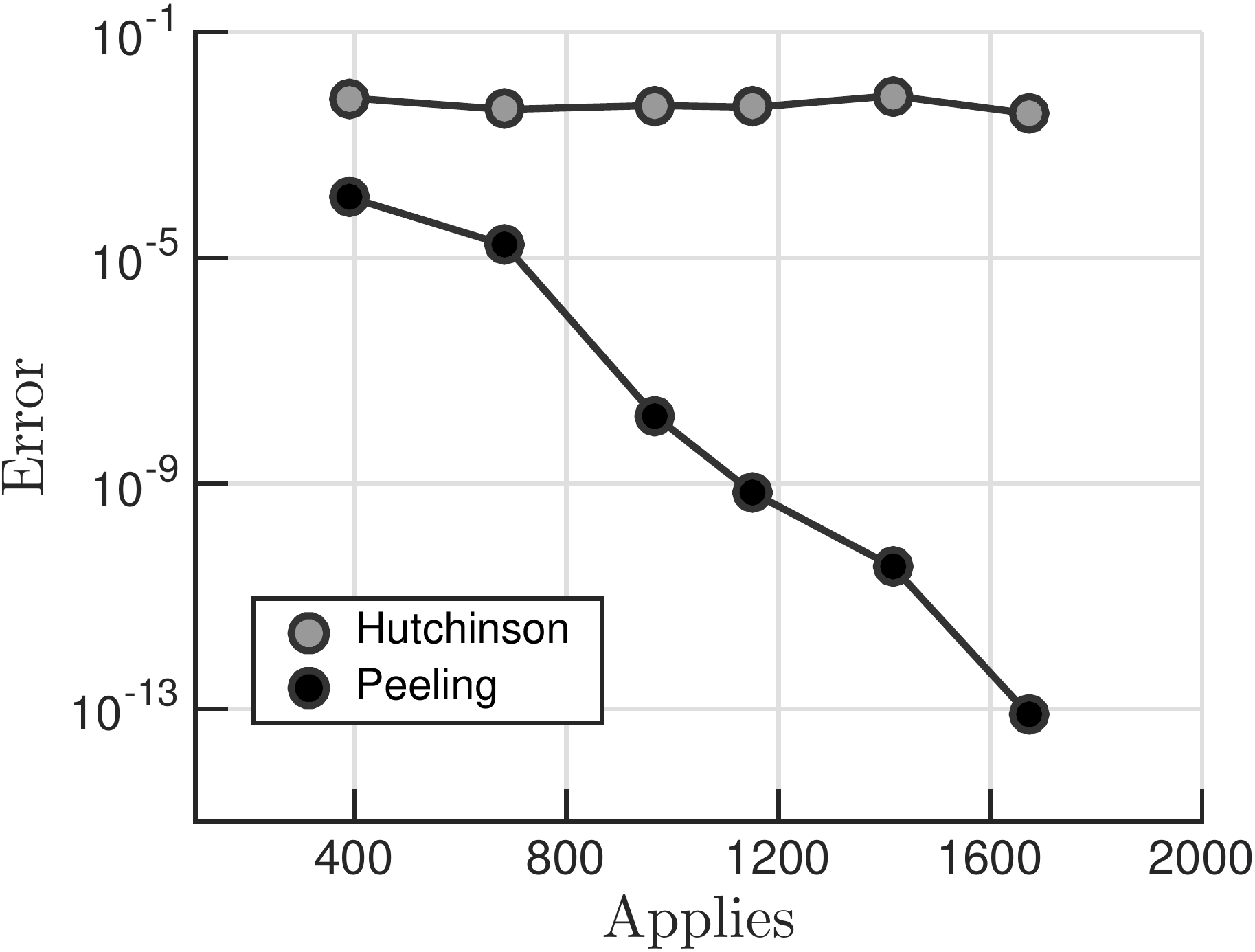}
\end{minipage}%
\quad\quad\quad\quad
\begin{minipage}{0.4\textwidth}%
\centering
\includegraphics[width=\textwidth]{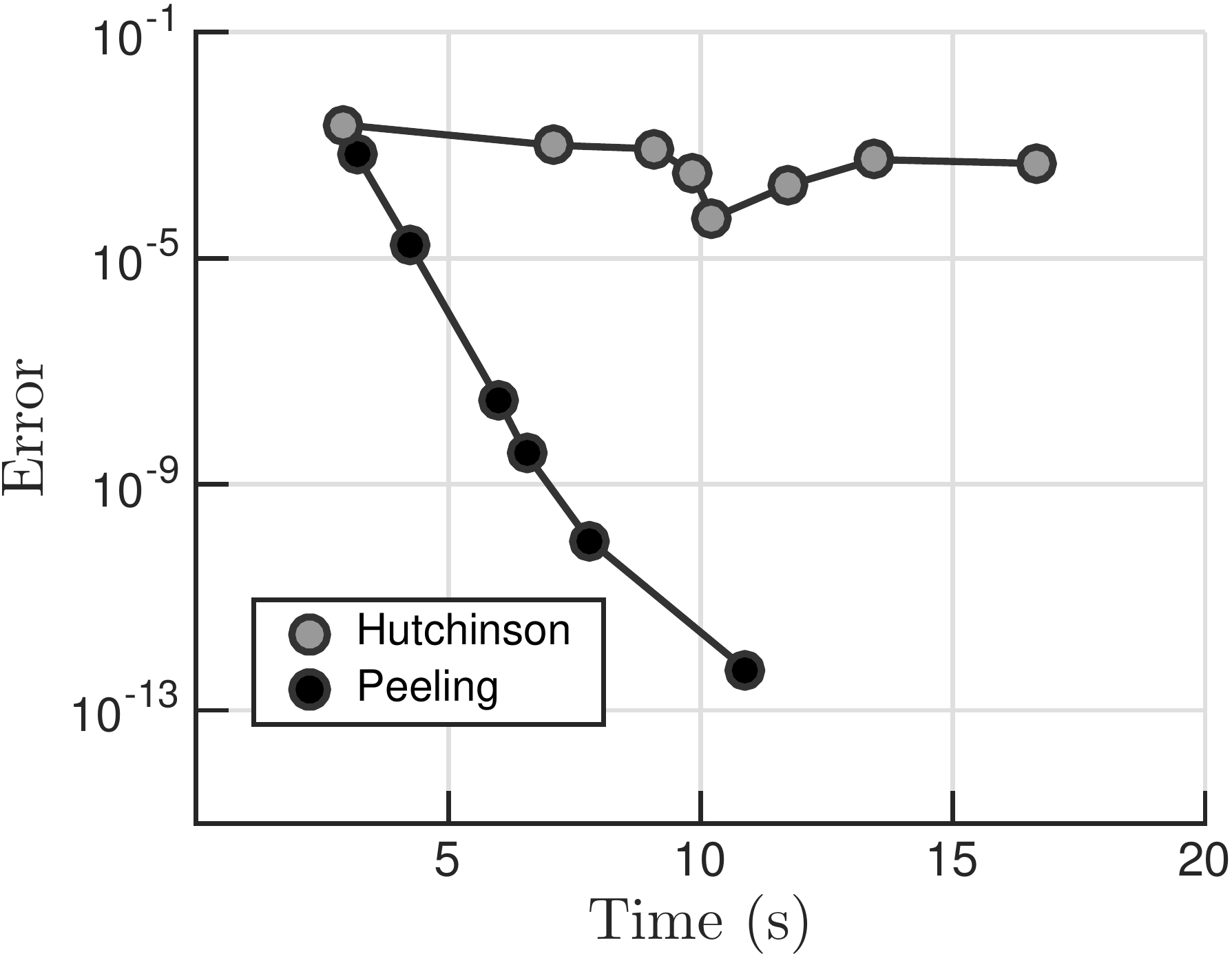}
\end{minipage}%

\caption{Plotting the relative error in the trace approximation versus the number of applications of the black-box operator, we see in the left figure that the Hutchinson estimator exhibits characteristic inverse square root convergence as dictated by the central limit theorem.  In contrast, using the peeling algorithm described in \cref{sec:peelsub}, we see that the same number of black-box applies yields a much improved accuracy, though the rate of convergence depends on the spectra of off-diagonal blocks of the operator.  In the right figure, we plot the error of each method versus wall-clock time to establish that the same scaling behavior holds when error is viewed as a function of time-to-solution{.}  \label{fig:hutch_prof}}
\end{figure}

\subsection{Gridded synthetic data example}\label{sec:uniform}
We now profile a full objective function and gradient evaluation for the MLE problem for $\theta$.  As before, we consider the Mat\'ern kernel of \eqref{eq:specmatern} with noise $\sigma_N^2 = 1\times 10^{-4}$.

We set the parameter vector at $\theta = [10,7]$ and again take the observation locations to be a regular $\sqrt{n}\times\sqrt{n}$ grid discretizing the square $[0,100]^2$.  Evaluating $\ell(\theta)$ and $g_i$ for $i=1,\dots,2$ then requires three skeletonization factorizations and two different trace approximations.  We investigate the algorithm's performance for two different peeling tolerances $\epsilon_\text{peel}$, and in each case take the factorization tolerance to be $\epsilon_\text{fact}=\nobreak\frac{1}{1000}\epsilon_\text{peel}$.  For varying $n$ between $64^2$ and $512^2$, we measured the runtime of both the factorization portion and peeling portion of \cref{alg:mle}.  We note that, given the factorizations and peeled trace estimates, the remainining pieces of \cref{alg:mle} are several orders of magnitude less costly in terms of runtime.

In \cref{fig:it_time} (left), we plot the total runtime for a single objective function and gradient evaluation for the uniform grid of observations (corresponding data in \cref{tab:griddata}).  We see from the figure that the runtime seems to scale as roughly $\O(n^{3/2})$ with the number of observations; a least-squares fit of the data gives $\O(n^{1.6})$.  As can be seen in the table, the amount of time spent in calculating the recursive skeletonization factorizations is roughly an order of magnitude less than the time spent in the peeling trace approximation, and, further, scales slightly better than peeling for this example.

\begin{table}
\caption{Runtime for one objective function and gradient evaluation (\ie, the work for a single iteration) on a uniform grid of observations.\label{tab:griddata}}
\centering
\ra{1.0}
\begin{tabular}{ccccc} \toprule
 $\epsilon_\text{peel}$ & $n$ & $t_\text{fact}$ (s) & $t_\text{peel,weak}$ (s)  & $t_\text{total}$ (s) \\ \hline
 \multirow{ 4}{*}{$1\times 10^{-6}$} &$64^2$ & $7.34\times 10^0$ & $1.23\times 10^1$ & $1.96\times 10^1$ \\ 
 &$128^2$ & $4.86\times 10^1$ & $1.20\times 10^2$ & $1.68\times 10^2$ \\
 &$256^2$ & $2.73\times 10^2$  & $1.22\times 10^3$ & $1.49\times 10^3$\\ 
 &$512^2$& $1.41\times 10^3$ & $1.39\times 10^4$ &$1.53\times 10^4$  \\ \midrule
  \multirow{ 4}{*}{$1\times 10^{-8}$} &$64^2$ & $9.70\times 10^0$ & $1.55\times 10^1$&  $2.52\times 10^1$ \\
 &$128^2$ & $7.08\times 10^1$ & $1.68 \times 10^2$ &$2.39\times10^2$\\ 
 &$256^2$ & $4.28\times 10^2$  & $1.76\times 10^3$ & $2.29\times 10^3$\\
 &$512^2$& $2.64\times 10^3$ &  $1.54\times 10^4$& $1.80\times 10^4$  \\ \bottomrule
\end{tabular}
\end{table}

\begin{figure}%
\centering
\begin{minipage}{0.4\textwidth}%
\centering
\includegraphics[width=\textwidth]{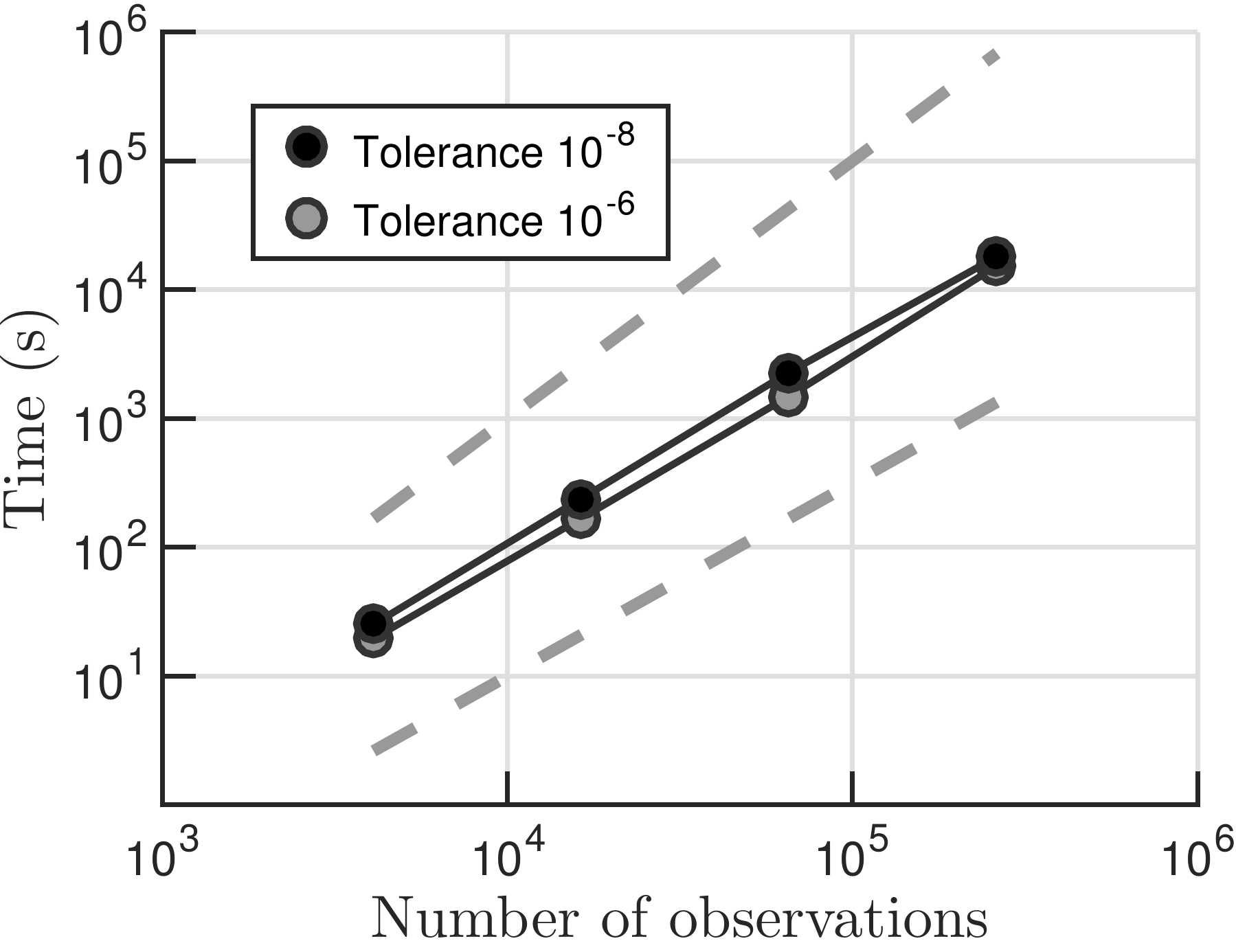}
\end{minipage}%
\quad\quad\quad\quad
\begin{minipage}{0.4\textwidth}%
\centering
\includegraphics[width=\textwidth]{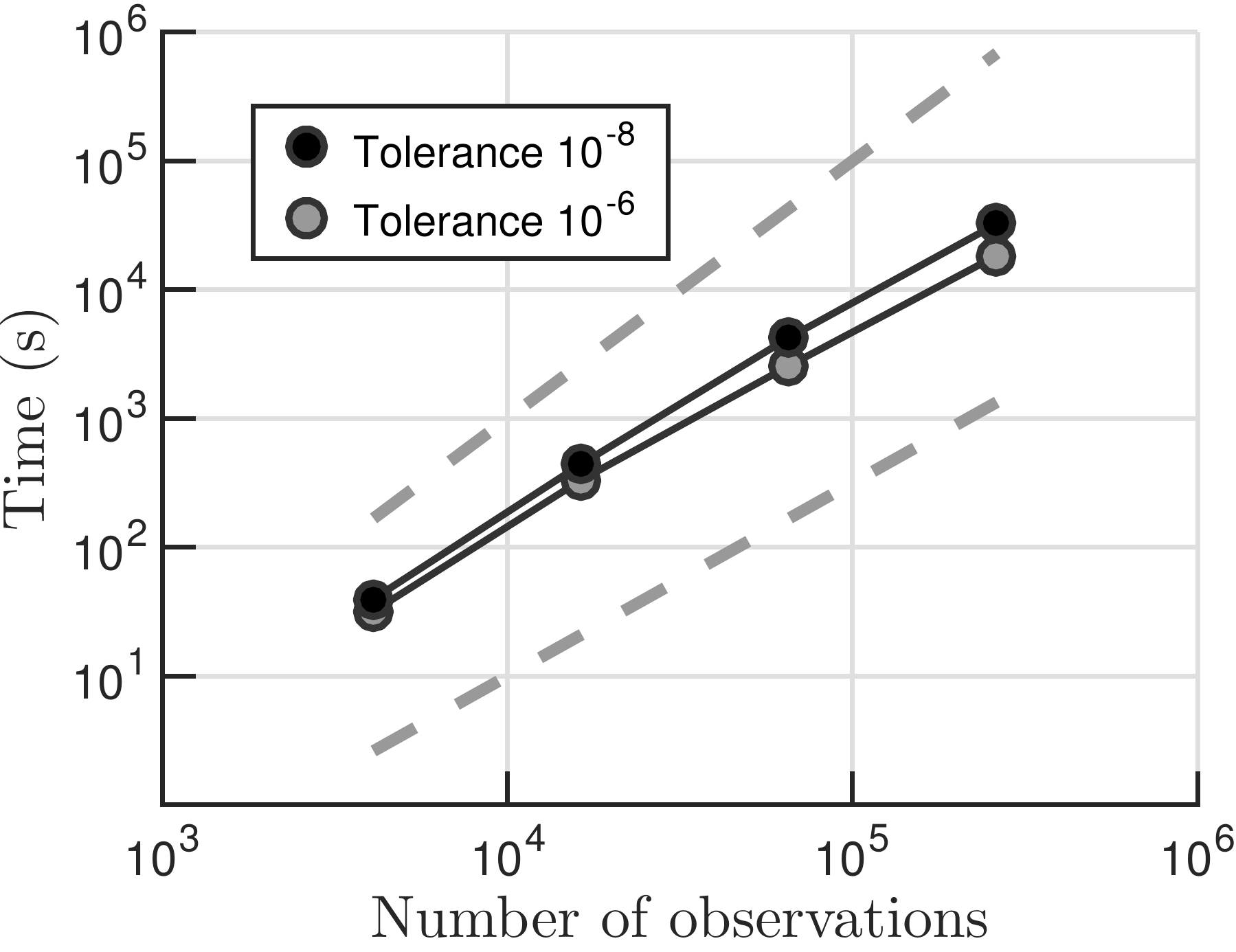}
\end{minipage}%

\caption{On the left we plot the total runtime of evaluating a single objective function and gradient for the uniform grid example of \cref{sec:uniform} as a function of the total number of observations for two different tolerances.  The top trend line shows $\O(n^2)$ scaling and the bottom shows $\O(n^{3/2})$ scaling.  On the right we plot the corresponding results for the scattered data of \cref{sec:ocean1} with the same trend lines. We observe that the scaling in all cases looks like $\O(n^{3/2})$. \label{fig:it_time}}
\end{figure}

\subsection{Scattered synthetic data example}\label{sec:ocean1}
While all examples thus far have used a regular grid of observations, our framework does not rely on this assumption.  To complement the examples on gridded observations, we repeat the same experiment from the previous section with real-world observation locations coming from release 2.5 of the International Comprehensive Ocean-Atmosphere Data Set (ICOADS) \cite{data} obtained from the National Center for Atmospheric Research at \url{http://rda.ucar.edu/datasets/ds540.0/}.  We subselect from ICOADS a set of sea surface temperatures measured at varying locations in the North Atlantic ocean between the years 2008 and 2014.  Restricting the data to observations made in the month of July across all years and obtain roughly 300,000 unique observation locations and corresponding sea surface temperature measurements{, some of which can be seen in \cref{fig:ocean}.} 

{
Because large-scale spatial measurements typically cover a non-trivial range of latitudes and longitudes, the development of valid covariance functions on the entire sphere that respect the proper distance metric has been the subject of much recent work, see, e.g., Gneiting \cite{gneiting2013} and related work \cite{porcu,jun2008,heaton}.  As the focus of this manuscript is not statistical modeling, we employ a simplified model based on Mercator projection of the observations to two spatial dimensions.  Note that the choice of axis scaling in the Mercator projection is arbitrary; in our convention the horizontal axis spans 90 units and the vertical axis spans 70 units.}

To perform scaling tests on the cost of an objective function and gradient evaluation according to \cref{alg:mle}, we subselect from our full dataset of unique observation locations by drawing observations uniformly at random without replacement.  \Cref{fig:it_time} (right) shows the runtime scaling results as a function of the number of observations, with corresponding data in \cref{tab:scattereddata}.  We see that the runtime scaling for the scattered observations follows essentially the same scaling behavior as the gridded observations from \cref{sec:uniform}, with observed complexity between $\O(n^{1.5})$ and $\O(n^{1.6})$.  Again, the skeletonization factorizations take considerably less time than the trace estimation.

\begin{table}
\caption{Runtime for one objective function and gradient evaluation (\ie, the work for a single iteration) on scattered observations with locations from ICOADS.\label{tab:scattereddata}}
\centering
\ra{1.0}
\begin{tabular}{ccccc} \toprule
 $\epsilon_\text{peel}$ & $n$ & $t_\text{fact}$ (s) & $t_\text{peel,weak}$ (s)  & $t_\text{total}$ (s) \\ \midrule
 \multirow{ 4}{*}{$1\times 10^{-6}$} &$2^{12}$ &$7.82\times 10^0$ & $2.41\times 10^1$ & $3.19\times 10^1$ \\
 &$2^{14}$ & $4.25\times10^1$ & $2.91\times 10^2$ & $3.34\times 10^2$ \\
 &$2^{16}$ & $8.60 \times 10^1$  & $2.49\times 10^3$ & $2.57\times10^3$ \\
 &$2^{18}$& $4.84\times 10^2$ & $1.79\times 10^4$& $1.84\times10^4$\\ \midrule
  \multirow{ 4}{*}{$1\times 10^{-8}$} &$2^{12}$ & $1.06\times 10^1$ & $2.92\times 10^1$& $3.98\times 10^1$ \\
 &$2^{14}$ & $5.96\times 10^1$ & $3.83\times 10^2$ &$4.43\times 10^2$\\
 &$2^{16}$ & $2.72\times 10^2$ & $3.98\times 10^3$ & $4.26\times10^3$ \\
 &$2^{18}$& $1.03 \times 10^3$ &  $3.15\times 10^4$& $3.25\times 10^4$  \\ \bottomrule
\end{tabular}
\end{table}

As an illustrative example of the full power of \cref{alg:mle} in context, we take a subset of $n=2^{16}$ scattered observations and realize an instance of a Gaussian process at those locations with true parameter vector $\theta^*=[10,7]$ and noise parameter $\sigma_N^2=1\times10^{-4}$ to generate the observation vector $z$.  Setting the peel tolerance to $\epsilon_\text{peel}=1\times 10^{-6}$ and the factorization tolerance to $\epsilon_\text{fact}=1\times 10^{-9}$, we plugged our approximate log-likelihood and gradient routines into the MATLAB\textsuperscript{\textregistered} routine \texttt{fminunc} for unconstrainted optimization using the quasi-Newton option.  Starting from an initial guess of $\theta_0=[3,30],$ we found that after 13 iterations (14 calls to \cref{alg:mle}) the first-order optimality as measured by the $\ell_\infty$-norm of the gradient had been reduced by three orders of magnitude, yielding an estimate of $\hat\theta=[10.0487,7.0496]$ after approximately $4.86\times 10^4$ seconds.   

\begin{remark}
While a large percentage of this runtime was spent in the peeling algorithm, we find it worthwhile to note that in this example the use of our gradient approximation proved essential---using finite difference approximations to the gradient led to stagnation at the first iteration, even with a factorization tolerance $\epsilon_\text{fact}=1\times 10^{-15}$, \ie, at the limits of machine precision.
\end{remark}  

Because the number of iterations to convergence depends on many factors (\eg, the choice of optimization algorithm, how well the data can be modeled by a Gaussian process, and many convergence tolerances depending on the chosen algorithm), we do not find it useful to attempt to profile the full minimization algorithm more extensively than this, but direct the reader instead to the single-iteration results.

\subsection{{Scattered ocean data example}}\label{sec:ocean2}

  While the factorizations and peeling in \cref{alg:mle} depend only on the locations of the observations and not their values, the log-likelihood $\ell(\cdot)$ can have a more complicated shape with real observations $z$ than with synthetic data, which may impact the required tolerance parameters $\epsilon_\text{fact}$ and $\epsilon_\text{peel}$ and the difficulty of MLE.  Further, there are a number of practical considerations relevant for real data not addressed thus far in our synthetic examples.

  As a refinement of \eqref{eq:normaldist}, suppose now that the data are distributed according to $z \sim N(\mu{\bf 1}, \sigma^2\m{\Sigma}(\theta))$, where ${\bf 1}\in\R^n$ is the all-ones vector, $\mu$ and $\sigma^2$ represent the constant but unknown mean and variance level, and our parameterized Mat\'ern model is given by (for several different $\nu$)
  {
\begin{align}\label{eq:diffkernels}
[\m{\Sigma}(\theta)]_{ij} &= \frac{1}{\Gamma(\nu)2^{\nu-1}}\left(\frac{\sqrt{2\nu}r_{ij}}{\rho}\right)^\nu K_\nu\left(\frac{\sqrt{2\nu}r_{ij}}{\rho}\right) + \sigma_N^2 \delta_{ij}\\
&= \left\{\begin{array}{ll}
\exp\left(-\frac{r_{ij}}{\rho}\right),&\nu = 1/2,\\
 \left(1 + \frac{\sqrt{3}r_{ij}}{\rho}\right)\exp\left(-\frac{\sqrt{3}r_{ij}}{\rho}\right) + \sigma_N^2 \delta_{ij}, &\nu=3/2,\\ 
 \left(1 + \frac{\sqrt{5}r_{ij}}{\rho} + \frac{5r_{ij}^2}{3\rho^2}\right)\exp\left(-\frac{\sqrt{5}r_{ij}}{\rho}\right) + \sigma_N^2 \delta_{ij},
 &\nu=5/2, \end{array} \right.\nonumber
\end{align}}
with $r_{ij}=\|x_i-x_j\|$.
In this example, the parameter vector is $\theta=[{\rho},\sigma_N^2]$, consisting of a {single correlation length parameter} and the noise level.  To optimize the new log-likelihood over $\mu$, $\sigma^2$, and $\theta$, we note that optimization over $\mu$ and $\sigma^2$ results in closed form expressions for these parameters in terms of $\theta$, which may then be substituted back into \cref{eq:loglikelihood} to obtain the \emph{log profile likelihood} for this model
  \begin{align}\label{eq:prof}
  \tilde\ell(\theta) &\equiv  -\frac{1}{2}\log |\m{\Sigma}| - \frac{n}{2} \log \left(z^T(\m{\Sigma} + {\bf 1}{\bf 1}^T)^{-1}z\right) +\frac{n}{2}\left(\log n - 1 - 2\pi\right) 
  \end{align}
  with gradient components given by
\begin{align*}\label{eq:gradientprof}
\tilde g_i&\equiv -\frac{1}{2} \Tr (\m{\Sigma}^{-1}\m{\Sigma}_i) + \frac{n}{2} \left(\frac{z^T(\m{\Sigma} + {\bf 1}{\bf 1}^T)^{-1}\m{\Sigma}_i(\m{\Sigma} + {\bf 1}{\bf 1}^T)^{-1}z}{z^T(\m{\Sigma} + {\bf 1}{\bf 1}^T)^{-1}z}\right)
,\quad i=1,\dots,p.
\end{align*}
 Optimization of this new model fits neatly into the computational framework of \cref{alg:mle} with trivial modifications.  The new model has the advantage of greater plausibility, though it still admits many further improvements.

\begin{table}
\caption{Parameter estimates $\hat\theta = [{\hat \rho}, \widehat{\sigma_N^2}]$, the corresponding mean and variance level $\mu(\hat\theta)$ and $\sigma^2(\hat\theta)$, and the log-likelihood values of the fitted parameters for the model with kernel \eqref{eq:diffkernels}. We note that for $\nu=1/2$, the estimate of $\sigma_N^2$ is at its lower bound.  \label{tab:proflikelihood}}
\centering
\ra{1.0}
\begin{tabular}{cccccc} \toprule
$\nu$ & ${\hat \rho}$ & $\widehat{\sigma_N^2}$  & $\mu(\hat\theta)$ & $\sigma^2(\hat\theta)$ & $ \ell(\hat\theta)$\\ \midrule
 $1/2$& $ {2.12\times10^{+0}}   $ & ${1.00\times 10^{-8}}$ & ${-5.65\times10^{-1}}$ & ${9.97\times 10^{-1}}$& ${9.28\times 10^{+4}}$ \\ 
 $3/2$& $ {2.52\times 10^{-1}}$ & ${4.37\times 10^{-3}}$ & $   {-4.38\times10^{-1}}$ & ${8.71\times 10^{-1}}$& ${8.42\times 10^{+4}}$  \\ 
 $5/2$& $ {1.78\times 10^{-1}}    $ & {$5.67\times 10^{-3}$} & $ {-4.30\times 10^{-1}}$ & ${8.57\times 10^{-1}}$& ${7.74\times 10^{+4}}$ \\\bottomrule
\end{tabular}
\end{table}

From the full set of sea surface temperature observations, we subselected $n=2^{17}$ unique temperature measurements corresponding to observations between July 2013 and August 2013.    Taking $\epsilon_\text{fact}=1\times10^{-9}$ and {$\epsilon_\text{peel}=1\times10^{-6}$}, we use the MATLAB\textsuperscript{\textregistered} optimization routine $\texttt{fmincon}$ with the `SQP' option to estimate the {correlation length} parameter $\theta_1={\rho}$ and noise parameter $\theta_2=\sigma_N^2$ for the standardized temperature measurements.  {For $\nu=3/2$ and $\nu=5/2$ this was accomplished by numerically maximizing \eqref{eq:prof} subject to the lower-bound constraint $\sigma_N^2\ge 1\times 10^{-5}$, which was necessary to ensure $\m{\Sigma}$ was not numerically rank-deficient.  For $\nu=1/2$ the covariance matrix $\m{\Sigma}$ is naturally better conditioned so a looser lower-bound $\sigma_N^2 \ge 1\times 10^{-8}$ was used.}  The tolerances dictating the minimum step-size and minimum change in objective function between successive iterates were both set to {$1\times 10^{-6}$}.  

Due to the non-convex nature of the problem, we tried several choices of starting parameter {for each $\nu$}; the results we present are for the best initialization in each case.  {For choices of initial parameters leading to convergent iterates (\eg, $\theta_0=[5,1]$ or $\theta_0=[1\times 10^{-1},1\times 10^{-3}]$), the converged solutions all agreed to the specified tolerance and the objective function value at the optimal points agreed to six digits.  For some choices of initial parameters, the optimization terminated prematurely due to the relative improvement tolerances used to evaluate convergence (\ie, when the initial parameters are very poor, even a large improvement relative to the initial parameters can be far from the best choice of parameters).  We did not observe any evidence of multiple local optima, though the possibility  that our reported parameters are globally suboptimal cannot be ruled out.}

The results of our numerical optimization for each choice of $\nu$ can be seen in \cref{tab:proflikelihood}, where in each case optimization terminated due to the step-size tolerance.   At each corresponding $\hat\theta$, however, we note that the gradient is small relative to the objective function.

Of the three different models, we find that the fitted model for $\nu=1/2$ gives the best fit as measured both by comparative likelihood and qualitatively (see \cref{fig:ocean}).  Since $\nu$ dictates the smoothness of denoised process, these results imply that the best description of the observed data among our choices is the one with the least assumptions on smoothness.  We caution that this does not preclude a much better fit with a more sophisticated model, but this simple example illustrates that our framework is effective for MLE even for real observations.

\begin{figure}
\centering
\includegraphics[width=\textwidth]{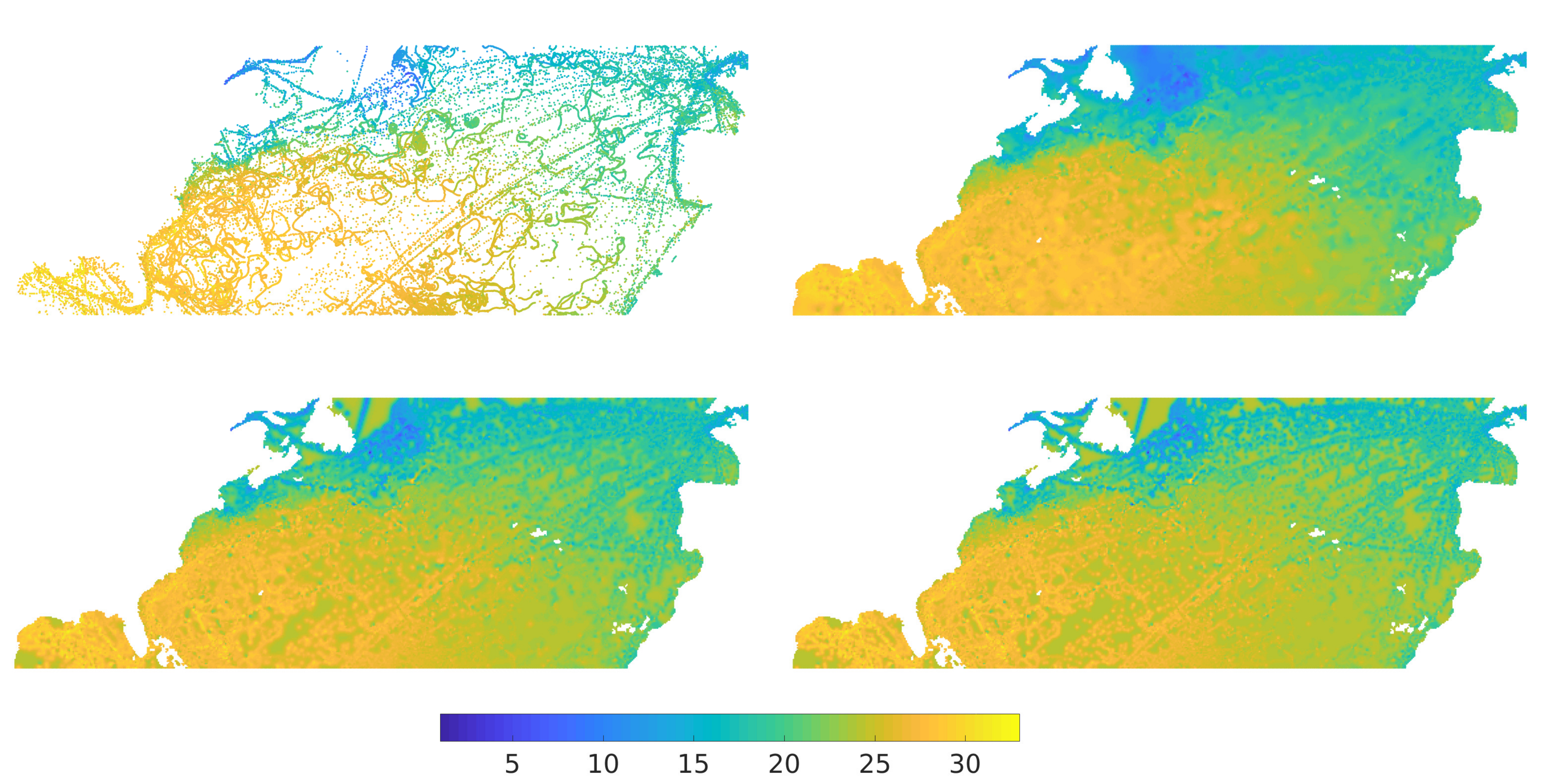}
\caption{In the top-left plot, we show a subselection of $n=2^{17}$ Atlantic ocean surface temperature measurements from ICOADS projected to a 2D plane through Mercator projection and then scattered on top of a white background for visualization.  Fitting the model with constant mean and covariance given by \eqref{eq:diffkernels}, we use the estimated parameters to find the conditional mean temperatures throughout this region of the Atlantic for $\nu=1/2$ (top right), $\nu=3/2$ (bottom left), and $\nu=5/2$ (bottom right).  The color bar shows the estimated sea surface temperature in Celsius.  \label{fig:ocean}}
\end{figure}

\section{Conclusions}\label{sec:conclusion}
The framework for Gaussian process MLE presented in this paper and summarized in \cref{alg:mle} provides a straightforward method of leveraging hierarchical matrix representations from scientific computing for fast computations with kernelized covariance matrices arising in spatial statistics.  The general linear algebraic approach to approximating off-diagonal blocks of the covariance matrix to a specified error tolerance by adaptively determining their ranks gives a flexible way of attaining high-accuracy approximations with reasonable runtime.  A further merit to this approach is that it does not rely on having gridded observations or a translation-invariant covariance kernel.

While in this paper we have focused on {maximum likelihood} estimation for Gaussian processes, these methods are equally viable for {the Bayesian setting.  For example, computing maximum \emph{a posteriori} estimates follows essentially the same approach with the addition of a term depending on the prior.  Further, sampling from the posterior distriution of $\theta$ in a Bayesian setting can be accomplished using standard Markov chain Monte Carlo methods based on quickly evaluating the likelihood and posterior.  This can also be combined with \cref{remark:conditional} for a fully Bayesian treatment.}

Our numerical results in \cref{sec:results} show that our framework scales favorably when applied to our two test cases {(the rational quadratic and Mat\'ern family kernels)}, leading to runtimes scaling approximately as $\O(n^{3/2})$ with $n$ the number of observations.  Further, we see that the tolerance parameter $\epsilon_\text{peel}$ controlling the rank of off-diagonal block approximations in the peeling algorithm serves as a good estimate of the order of the error in the ultimate trace approximation as well.  In practice, the tolerances $\epsilon_\text{peel}$ and $\epsilon_\text{fact}$ can be dynamically modified during the course of the maximum likelihood process for performance, \eg, one could use relatively low-accuracy approximations during initial iterations of the {optimization} routine and slowly decrease the tolerance as the optimization progresses.

While the methods and complexity estimates discussed in this paper relate to the case of two spatial dimensions, they trivially extend to one-dimensional (time-series) data or quasi-two-dimensional data, \eg, observations in three dimensions where the sampling density in one dimension is much smaller than in the other two.  While the same methods apply in principle to truly three-dimensional data, the corresponding computational complexity is bottlenecked by the cost of using peeling to obtain a high-accuracy trace estimate of the matrices $\m{\Sigma}^{-1}\m{\Sigma}_i$ for $i=1,\dots,p$ due to increased rank growth.  In fact, even in the two-dimensional case it is clear from \cref{tab:griddata} and \cref{tab:scattereddata} that the most expensive piece of of our framework in practice is determining these traces.  One solution is to instead use the hierarchical matrix representations inside of an estimator such as that of Stein \etal \cite{Stein2013}, which obviates the need for the trace.  For the true MLE, however, future work on efficiently computing this trace to high accuracy is necessary.  {Given a method for efficiently computing this trace for 3D data, we expect that related {factorizations based on more sophisticated use of skeletonization} should give complexities for computing the log-likelihood and gradient that are as good as or better than those we obtain with recursive skeletonization in the 2D case.  For example, the hierarchical interpolative factorization \cite{hifie} {(which uses further levels of compression to mitigate rank-growth of off-diagonal blocks) may be used in our framework  as an efficient method of applying} $\m{\Sigma}_i$ and $\m{\Sigma}^{-1}$ and computing the log-determinant of $\m{\Sigma}$ for 3D problems.}

{While Gaussian process regression is widely used for data in $\R^d$ with $d$ much larger than three, the methods of this paper are designed with spatial data in mind.  In particular, in the high-dimensional setting the geometry of the observations becomes very important for efficiency.  If the data can be well-approximated according to an intrinsic low-dimensional embedding that is efficient to identify, there is hope for efficient approximations using hierarchical rank structure (see, for example, Yu et al.\ \cite{yu}).  However, in general we expect that rank-structured factorizations will continue to be most effective for low-dimensional spatial applications. }

\section*{Acknowledgments}
The authors thank Matthias Cremon, Eileen Martin, Sven Schmit, and Austin Benson for useful discussion on Gaussian process regression, the anonymous reviewers for thoughtful comments that improved the presentation of this paper, and Stanford University and the Stanford Research Computing Center for providing computational resources and support that have contributed to these research results.

\bibliographystyle{siamplain}
\bibliography{M111647}

\end{document}